\documentclass{sig-alternate}
\usepackage{times, amsmath, alltt, amssymb, xspace, setspace,epsfig}
\usepackage{psfrag}
\usepackage{algorithm, algorithmic}

%%%%%%%%%%%%%%%%%%%%%%%%%%%%%%%%%%%%%%%%%%%%%%%%%%%%%%%
%%%%% General stuff

%\setlength{\evensidemargin}{-0.5in} \setlength{\oddsidemargin}{-0.30in}
%\setlength{\textwidth}{6.7in} \setlength{\textheight}{9in}
%\setlength{\topmargin}{-0.25in} \setlength{\headheight}{0in}

%\setlength{\evensidemargin}{0in} \setlength{\oddsidemargin}{0in}
%\setlength{\textwidth}{7.1in} \setlength{\textheight}{8.5in}
%\setlength{\topmargin}{0in} \setlength{\headheight}{0in}

%\setlength{\unitlength}{1mm}

\newtheorem{theorem}{Theorem}

\newtheorem{prop}[theorem]{Proposition}
\newtheorem{cor}[theorem]{Corollary}
\newtheorem{definition}{Definition}
%\newdef{definition}{Definition}

\newtheorem{alg}{Algorithm}

\newcommand\comment[1]{}

\newcommand\todo[1]{\{\textbf{Todo:} \textit{#1}\}}

\newcommand{\DP}[1]{\ensuremath{#1\mbox{-}\mathsf{DP}}\xspace}

\newcommand{\DPS}[1]{\ensuremath{#1\mbox{-}\mathsf{DPS}}\xspace}
\newcommand{\bedDPS}{\DPS{(\beta,\epsilon,\delta)}}

\newcommand{\eDP}{\DP{\epsilon}}
\newcommand{\edDP}{\DP{(\epsilon,\delta)}}

\newcommand{\NRI}[1]{\ensuremath{#1\mbox{-}\mathsf{NRI}}\xspace}

\newcommand{\edNRI}[1]{\NRI{(\epsilon,\delta)}}

\renewcommand{\AA}{\ensuremath{\mathcal{A}}\xspace}

\newcommand{\mypara}[1]{\vspace*{0.05in}\noindent\textbf{#1} \xspace}

\xspace

\newcommand{\odds}{\ensuremath{\mathsf{odds}}\xspace}

\begin{document}

\hyphenation{identfi-cation supp-ression}

\title{On Sampling, Anonymization, and Differential Privacy: Or, {\ttlit{k}}-Anonymization Meets Differential Privacy}
%\maketitle

\numberofauthors{3}

\author{
% You can go ahead and credit any number of authors here,
% e.g. one 'row of three' or two rows (consisting of one row of three
% and a second row of one, two or three).
%
% The command \alignauthor (no curly braces needed) should
% precede each author name, affiliation/snail-mail address and
% e-mail address. Additionally, tag each line of
% affiliation/address with \affaddr, and tag the
% e-mail address with \email.
%
% 1st. author
\alignauthor
Ninghui Li\\
       \affaddr{Purdue University}\\
       \affaddr{305 N. University Street,}
       \affaddr{West Lafayette, IN 47907, USA}\\
       \email{ninghui@cs.purdue.edu}
% 2nd. author
\alignauthor
Wahbeh Qardaji\\
       \affaddr{Purdue University}\\
       \affaddr{305 N. University Street,}
       \affaddr{West Lafayette, IN 47907, USA}\\
       \email{wqardaji@cs.purdue.edu}
% 3rd. author
\alignauthor
Dong Su\\
       \affaddr{Purdue University}\\
       \affaddr{305 N. University Street,}
       \affaddr{West Lafayette, IN 47907, USA}\\
       \email{su17@cs.purdue.edu}
}

\maketitle

\pagenumbering{arabic}

\sloppypar

\begin{abstract}
%Privacy-preserving microdata publishing currently lacks a solid theoretical foundation.  Most existing techniques are developed to satisfy syntactic privacy notions such as $k$-anonymity, which fails to provide strong privacy guarantees.  The recently proposed notion of differential privacy has been widely accepted as a sound privacy foundation for statistical query answering.  However, no general practical microdata publishing techniques are known to satisfy differential privacy.  In this paper, we start to bridge this gap.  We first analyze k-anonymization methods and show how they fail to provide sufficient protection against re-identification, which it was designed to protect.
This paper aims at answering the following two questions in privacy-preserving data analysis and publishing: What formal privacy guarantee (if any) does $k$-anonymization provide?  How to benefit from the adversary's uncertainty about the data?
We have found that random sampling provides a connection that helps answer these two questions, as sampling can create uncertainty.  The main result of the paper is that $k$-anonymization, when done ``safely'', and when preceded with a random sampling step, satisfies $(\epsilon,\delta)$-differential privacy with reasonable parameters.  This result illustrates that ``hiding in a crowd of $k$'' indeed offers some privacy guarantees.
%This result is, to our knowledge, the first to link $k$-anonymization with differential privacy, and illustrates that ``hiding in a crowd of $k$'' indeed offers some privacy guarantees.
%This naturally leads to future research in designing ``safe'' and practical $k$-anonymization methods.
This result also suggests an alternative approach to output perturbation for satisfying differential privacy: namely, adding a random sampling step in the beginning and pruning results that are too sensitive to change of a single tuple.
Regarding the second question, we provide both positive and negative results.  On the positive side, we show that adding a random-sampling pre-processing step to a differentially-private algorithm can greatly amplify the level of privacy protection.  Hence, when given a dataset resulted from sampling, one can utilize a much large privacy budget.  On the negative side, any privacy notion that takes advantage of the adversary's uncertainty likely does not compose.  We discuss what these results imply in practice.
%Finally, we show that current definitions of $(\epsilon, \delta)$-differential privacy require $\delta$ to be very small to provide sufficient privacy protection when publishing microdata, making the notion impractical in some scenarios.  To address this problem, we introduce a notion called $f$-smooth $(\epsilon, \delta)$-differential privacy.
%an approach to make the $(\epsilon, \delta)$ relaxation more robust by providing a smoother bound on the probability of bad outcomes.

\comment{
Privacy preserving microdata publication is in an awkward situation right now. It is clearly needed; however, the proposed anonymization techniques seem to be inadequate. On the one hand, syntactic privacy notions like $k$-anonymity fail to capture the needed privacy. On the other hand, differential privacy is too strong to apply. In this paper, we aim to reconcile both extremes. We start by looking at publicized privacy incidents in order to surmise what society views as a privacy disclosure. Inspired by this, we define privacy in terms of `re-identification'. On the one extreme, we take a closer look at $k$-anonymity and show why it fails to protect against re-identification. Furthermore, we explore the merits of privacy in terms of ``hiding in a crowd of $k$'' and we describe a general anonymization approach to show how this satisfies the required privacy intuition. Our approach is built on a simple premise: sampling as a means of adding uncertainty to adversarial background knowledge. From the other extreme, we examine differential privacy and show why it is too strong to apply to micro-data. We then examine relaxations of differential privacy which allow a probability $\delta$ of disclosure. In order to tie the two ends, we show that our anonymization approach can satisfy this relaxation of differential privacy while ensuring that $\delta$ is small enough. However, we argue that current definitions of $(\epsilon, \delta)$-differential privacy might still not capture our definition of re-identification since $\delta$ might be unbounded. Hence, we make the $(\epsilon, \delta)$ relaxation more robust by providing a smoother bound on the probability of disclosure. Finally, we show how our anonymization approach can satisfy this new privacy definition.
}
\end{abstract}

\section{Introduction}
%!TEX root = ./Privacy.tex

%In an age where the minute details of our life are recorded and stored in databases, a clear paradox is emerging between the need to preserve the privacy of individuals and the need to use these collected data for research, public policy formulation, and other purposes.

In this paper we deal with the problem of using data in a privacy-preserving way.  We consider the scenario where a trusted curator obtains a dataset by gathering private information from a large number of respondents, and then make usage of the dataset while protecting the privacy of respondents.  The curator may learn and release to the public statistical facts about the underlying population.
%This is known as privacy-preserving data analysis, statistical disclosure control, inference control, or privacy-preserving data mining.
Alternatively, the curator may publish a sanitized (or, ``anonymized'') version of the dataset so that other parties can use the data to perform any analysis they are interested in.
%This is called privacy-preserving microdata publishing.

This paper aims at answering the following two questions in privacy-preserving data analysis and publishing.  The first is: What formal privacy guarantee (if any) does $k$-anonymization methods provide?  $k$-Anonymization methods have been studied extensively in the database community, but have been known to lack strong privacy guarantees.  The second question is: How to benefit from the adversary's uncertainty about the data?
More specifically, can we come up a meaningful relaxation of differential privacy~\cite{Dwo06,Dwo08} by exploiting the adversary's uncertainty about the dataset?  We now discuss these two motivations in more details.

%This problem has received the attention of multiple subfields in computer science and statistics.
%In the database community, most efforts have been on anonymizayion methods for microdata publishing.  Perhaps the best known privacy concept here is

The $k$-anonymity notion was introduced by Sweeny and Samarati~\cite{Swe02a, Swe02b, Samarati01, SS98a} for privacy-preserving microdata publishing.  This notion has been very influential.  Many $k$-anonymization methods have been developed over the last decades; it has also been extensively applied to other problems such as location privacy~\cite{GBL08}.  The $k$-anonymity notion requires that when only certain attributes, known as quasi-identifiers (QIDs), are considered, each tuple in a $k$-anonymized dataset should appear at least $k$ times.  In this paper, we consider a version of $k$-anonymity which treats all attributes as QIDs.
We show that even satisfying this strong version of $k$-anonymity does not protect against re-identification attacks. In addition, we identify the privacy vulnerabilities of existing $k$-anonymization algorithms.  We then define classes of $k$-anonymization algorithms that are ``strongly-safe'' and ``$\epsilon$-safe'', which avoid the privacy vulnerabilities of existing $k$-anonymization algorithms.  The question we aim to answer is whether these safe $k$-anonymization methods would provide strong enough privacy guarantee in practice.

The notion of differential privacy was introduced by Dwork et al.~\cite{Dwo06,DMN+06}.
%It is an algorithmic property, which applies only to an algorithm, and not to an output dataset.  It is not meaningful to say that an output dataset satisfies an algorithmic property, as the property relates the algorithm's behavior on different inputs.
An algorithm \AA satisfies $\epsilon$-Differential Privacy (\eDP) if and only if for any two neighboring datasets $D$ and $D'$, the distributions of $\AA(D)$ and $\AA(D')$ differ at most by a multiplicative factor of $e^\epsilon$.
A relaxed version of \eDP, which we use \edDP to denote, allows an error probability bounded by $\delta$.
Satisfying differential privacy ensures that even if the adversary has full knowledge of the values of a tuple $t$, as well as full knowledge of what other tuples are in the dataset, and is only uncertain about whether $t$ is in the input dataset, the adversary cannot tell whether $t$ is in the dataset or not beyond a certain confidence level.  As in most data publishing scenarios, the adversary is unlikely to have precise information about all other tuples in a dataset.  It is desirable to exploit this uncertainty to define a relaxed version of differential privacy, which can be easier to satisfy.
%This offers very strong privacy guarantees, and is increasingly being accepted as the privacy notion of choice.

%Differential privacy, however, is quite difficult to satisfy.  It is desirable to relax differential privacy to be more easily satisfied.  This will particularly help privacy-preserving data publishing.  \todo{This can be challenging.}

We have found that sampling provides the link between our two goals.  The main result in this paper is that sampling plus ``safe'' $k$-anonymization satisfies \edDP.  This result leads us to study the relationship between sampling and differential privacy.  We say that an algorithm satisfies differential privacy under sampling if the algorithm preceded with a random sampling step satisfies differential privacy.

Results about differential privacy under sampling both are of theoretical interest and have practical relevance. Sampling is a natural way to model the adversary's uncertainty about the data; thus this helps understand how to take advantage of this uncertainty in private data analysis.
On the practical side, many data publishing scenarios already involve a random sampling step.  Sometimes this sampling step is explicit, when one has a large dataset and wishes to release only a much smaller for research, such as the US census bureau's 1-percent Public Use Microdata Sample.
Sometimes, this sampling step is implicit; because the respondents are randomly selected, one can view the dataset as resulted from sampling.

%We study how one can benefit from such sampling in private data analysis.

%This privacy notion is of interest in several ways.  First,  Algorithms that satisfies \DPS can be readily applied.  Second, in situations that the input dataset is not the result of sampling, one can first apply an sampling step and then apply the algorithm.  Third, this can separate ``good'' $k$-anonymization algorithms from ``bad'' ones; hence algorithms that satisfy this offer some privacy guarantee even without the sampling step. \todo{Needs to elaborate.}

%We also show that this notion is composable.  This relaxes differential privacy in that algorithms that do not satisfy \edDP can satisfy \DPS.

%This result was quite unexpected.  The notions of differential privacy and $k$-anonymization have been considered to be widely different approaches to privacy-preserving data processing.
%\todo{Examine Vitaly's work.}

%While it is natural to apply differential privacy also to the microdata publishing domain, the existing results remain  limited.  Some are computationally expensive, such as~\cite{MT07,BLR08}.  Other apply the noise-addition techniques to specific domains where one essentially publishes counts of certain items, such as keywords in search logs~\cite{MKA+08,GMW+09,KKM+09}.

\comment{
In this paper, we aim at developing sound theoretical foundations for microdata anonymization methods.  We start by carefully examining the $k$-anonymity notion to find out why it fails to provide strong privacy protection.  We first observe that the $k$-anonymity notion, even when all attributes are treated as quasi-identifiers, is unable to prevent re-identification.  For example, the trivial method of randomly choosing some tuples from the input dataset and duplicating each chosen tuple $k$ times satisfies $k$-anonymity, even though it publishes a significant portion of the input tuples unchanged, enabling easy re-identification of them.  Furthermore, almost all $k$-anonymization methods that have been proposed in the literature are vulnerable because the way they compute the generation scheme to be applied to the input tuples makes the generalization overly dependent on tuples that contain extreme values, leaking information about these tuples.

One way to avoid the above vulnerability of existing $k$-anonymization methods is to use a generalization scheme that is independent of the input dataset.  That is, the algorithm applies a fixed generation scheme to the input tuples and then suppresses any tuple that appears less than $k$ times.  Such a ``safe'' $k$-anonymization algorithm has no apparent privacy weaknesses, and intuitively provides some level of privacy protection, as each tuple is indeed ``hiding in a crowd of at least $k$''.  Unfortunately, the algorithm still does not satisfy differential privacy, simply because the algorithm is deterministic.  The desire to understand and formalize what kind of privacy protection is offered by such an algorithm is the initial motivation for this research.
 }

The contributions of this paper are as follows:

\begin{itemize}
 \item
We prove that safe $k$-anonymization algorithm, when preceded by a random sampling step, provides $(\epsilon, \delta)$-differential privacy with reasonable parameters.

In the literature, $k$-anonymization and differential privacy have been viewed as very different privacy guarantees: $k$-anonymization is syntactic, and differential privacy is algorithmic and provides semantic privacy guarantees.  Our result is, to our knowledge, the first to link $k$-anonymization with differential privacy.  It illustrates that ``hiding in a crowd of $k$'' indeed offers privacy guarantees.

This result also provides a new way of satisfying differential privacy.  Existing techniques for satisfying differential privacy rely on output perturbation, that is, adding noise to the query outputs.  Our result suggests an alternative approach.  Rather than adding noise to the output, one can add a random sampling step in the beginning and prune results that are too sensitive to changes of individual tuples (i.e., tuples that violate $k$-anonymity).

 \item
We show both positive and negative results on utilizing the adversary's uncertainty about the data.  On the positive side, we show that random sampling has a privacy amplification effect for \edDP.  For an algorithm that satisfies \edDP, adding a sampling step with probability $\beta$ reduces both $e^{\epsilon}-1$ and $\delta$ by a factor of $\beta$.  For example, applying an algorithm that achieves ($\ln 2 \approx 0.69$)-differential privacy on dataset sampled with $0.1$ probability can achieve overall ($\ln 1.1\approx 0.095$)-differential privacy.

On the negative side, we show that any privacy notion that exploits the adversary's uncertainty about the data is unlikely to compose, in the sense that publishing the output from two algorithms together may be non-private.

Our results suggest the following approaches to take advantage of the fact that the input dataset is resulted from explicit or implicit sampling.  If one applies algorithms that satisfy \edDP, then one can allow a larger privacy budget because of sampling.  If one applies an algorithm that does not satisfy \edDP, but satisfies \edDP under sampling, then it is safe to apply the algorithm once.  However, if one has a large dataset, one can repeated sample and then apply the algorithm on each newly sampled dataset.

\comment{
 \item
Finally, we note that the notion of $(\epsilon,\delta)$-differential privacy, when applied to microdata publishing, is somewhat problematic, because the probability that a privacy breach occurs is not bounded by $\delta$, but rather by $\delta n$, where $n$ is the number of tuples in the output dataset.  While one can always choose a very small $\delta$, this may result in datasets of poor quality.  To address this problem, we introduce the notion of $f$-smooth $(\epsilon,\delta)$-differential privacy, where $f$ is a function such that $f(1)=1$ and $f$ is monotonically decreasing.  This notion essentially requires an algorithm to satisfy $(\epsilon,\delta)$-differential privacy for many pairs of $(\epsilon,\delta)$'s, and while $\epsilon$ increases, $\delta$ decreases.  We show that this notion is composable and that it is satisfied by the approach of sampling + safe $k$-anonymization.
}
\end{itemize}

%An algorithm $\AA$ satisfies  $f$-smooth $(\epsilon,\delta)$-differential privacy if and only if for any $x\geq \epsilon$, \AA satisfies $\left(x, \delta f\left(\frac{x}{\epsilon}\right) \right)$-differential privacy.

The rest of the paper is organized as follows.  We study the relationship between differential privacy and sampling in Section~\ref{sec:dps}.  We study $k$-anonymization and prove our main result in Section~\ref{sec:result}.
%We explore the privacy amplification benefit of random sampling in Section~\ref{sec:udp} and introduce our notion of $f$-smooth $(\epsilon,\delta)$-differential privacy in Section~\ref{sec:smooth}.
We discuss related work in Section~\ref{sec:related} and conclude in Section~\ref{sec:conclusions}.  An appendix includes proofs not found in the main body.

\comment{
While several efforts have been made to propose privacy-preserving methods of publishing microdata, such efforts are constantly under attack. On the one hand, syntactic methods of microdata publication such as $k$-anonymity \cite{Swe02a, Swe02b}, $l$-diversity~\cite{MGK+06} and $t$-closeness~\cite{LLV07} fail to capture the true privacy requirements that societies hold. On the other hand, differential privacy~\cite{Dwo06} offers strong worst-case privacy protection; a sound privacy notion, albeit one that cannot be used for publishing microdata.

In an attempt to circumvent the possibility of such an attack, While the idea of blending in a crowd does have its merits, the application of the idea through $k$-anonymity proved problematic. The problem with this approach is that it only looks at the anonymized output. Hence, a dataset can satisfy $k$-anonymity and yet trivially violate a person's privacy as we show later in this paper. To avoid this problem, research on more sound privacy notions was undertaken.

Unfortunately, the sound privacy notions are too difficult to satisfy in terms of microdata publishing. We give differential privacy as a prime example of such notions. Developed mainly for answering queries from statistical databases, differential privacy gives a worst case privacy guarantee. It claims that the amount of information that any attacker can learn about a tuple from a published dataset should not be different from the information the attacker can learn if that tuple was not in the dataset. It turns out that this definition lends itself nicely to answering statistical queries as one can add appropriately chosen noise to the result. Adapting differential privacy to microdata publication has proven to be an arduous task.

It was shown that a non-trivial deterministic algorithm cannot satisfy differential privacy \cite{Dwo06}. Almost all existing techniques use output perturbation to satisfy differential privacy.  We show random sampling can also satisfy DP.  Furthermore, it is unclear how random noise can be added to a set of tuples in order to ensure differential privacy.

Therefore, in this paper we aim to answer the following question: how can we publish microdata with strong privacy guarantees? Our objective is to come up with strong and practical algorithmic privacy notions for publishing microdata.  We want the privacy protection to be founded on sound theoretical principles; however, we do not want them to be so strong that no one knows how to satisfy them with reasonable security parameters. Hence, our approach can be viewed as coming from two directions and meeting at the middle.  On the one hand, we start from $\epsilon$-differential privacy, which was developed for answering statistical queries, and not for publishing microdata.  We identify which aspects of it may be too strong such that weakening them should still result in strong enough protection.  On the other hand, we start from elementary privacy-preserving data publishing methods that can be intuitively identified as privacy-preserving and try to formalize the intuitions that they are privacy-preserving.

The first obstacle we aim to tackle in the paper is coming up with a clear definition of what we are trying to protect. \emph{Privacy} can often be a very general term and can imply protection at several levels. Hence, we pin down what aspect of privacy is generally viewed as most important. We look at recent privacy incidents that have been publicized including the AOL search log scandal, the Netflix Prize data privacy concerns, and the Genome project privacy concerns. By analyzing what made such incidents be considered privacy violations we reach an important conclusion: if one is able to correctly identify \emph{one individual}'s records from the anonymized dataset, then the privacy guarantees of the whole dataset is brought into question. We thus attempt to formalize such notions of privacy under the general umbrella of \emph{re-identification}.

We give an informal definition of re-identification as the ability of an adversary to point to some feature in the output and attribute it to one input tuple. Coming up with a formal definition of re-identification is not easy. One can view Differential Privacy as an attempt, albeit it is too strong to microdata publishing since it assumes that the adversary has precise background knowledge.
}

\comment{
Given this current state of the art, one may argue that given the difficulties in establishing a strong privacy definition in privacy-preserving microdata publishing, we should give up the approach and focus on interactive private data analysis.  We argue that the interactive setting cannot replace microdata publishing for the reasons given below, and privacy-preserving microdata publishing remains necessary.

First, the interactive setting is unsuitable when the number of data users is large.  In an interactive mechanism, the approach to satisfy differential privacy is to add an appropriate level of noise to each query result. With each query, the overall level of privacy for the dataset decreases as more information is being leaked.
 %That is, if each query is associated with a privacy impact factor $\epsilon$, then after answering 10 such queries, the amount of privacy compromise is proportional to $10\epsilon$.
Hence, the standard approach is to have a privacy budget, a portion of which is used by answering each new query.  When the privacy budget is used up after answering a number of queries, the curator can no longer effectively answer any new queries.  This would be unacceptable in situations where a dataset needs to serve the general public such as when the census bureau provides the census data to the public.  Because the curator cannot be sure whether any two data users are colluding or not, the privacy budget has to be for \emph{all users}.  When there are many users, the issue of dividing the privacy budget among them becomes a thorny problem in itself.  In any case, each user can get only a very small budget that is unlikely to be sufficient.

Second, even when a data user has a generous privacy budget, perhaps because there are few data users, this privacy-budget paradigm can become a hinderance to creative research.  In a production setting where the queries one wants answered are well-known, this paradigm may be satisfactory.  However, when one wants to explore the data and test various hypothesis, the privacy budget can become a serous limitation.  For many analysis tasks, effective mechanisms that satisfy differential privacy do not yet exist.  Even when one uses only queries where effective mechanisms exist, privacy budgets become a finite and non-replaceable resource that will be consumed by each query, essentially limiting one's access to the data.  The research value of data shared under a privacy budget is significantly diminished.
%Finally, after one publishes research results derived from the data, it is not clear how others can conduct new research rather than repeat previous research.

%Furthermore, the actual value of research conducted via private queries can be limited. This is because one would be limited to a specific number of queries that are provided by the privacy mechanism for these purposes. For instance, in the Privacy Integrated Query language (PINQ), one is limited to a number of statistical aggregations and SQL-like transformations \cite{PINQ}. In many situations, this might be very restricting. For instance, to conduct data mining tasks, one might resort to data visualization techniques in order decide on how best to approach a particular problem. It would be very hard to provide such a level of utility via an interactive privacy preserving mechanism.

Furthermore, the interactive approach puts lots of resource constraints on the curator and requires a close coupling between the curator and data users.  The curator must run all the client queries and algorithms on its servers, which represents a significant overhead.  The data users, on the other hand, must reveal to the curator the queries they want to make, which may be considered confidential or sensitive by the data users.

Finally, there are many situations in which answering statistical queries simply does not achieve the purpose of sharing the data.  One example is Netflix's release of movie rating data for researchers to develop algorithms based on the data.  Given an algorithm, it is possible to make it differentially private, as shown in~\cite{MM09}.  However, the purpose here is to let researchers develop new algorithms.  It is unclear how this can be done without access to some data.  Another example comes from a real-world challenge that was introduced by the Chief Information Officer (CIO) of a large supermarket chain in the United States.  The CIO identified the problem of generating production-like data for the purpose of testing systems under development as a major challenge facing CIOs today.  The test data must be as close to production data as possible, yet one cannot use production data because it includes sensitive customer information that must be protected according to laws and regulations.  Privacy-preserving microdata publishing techniques can help in these situations.
 }

\comment{
\begin{itemize}
 \item
The trivial $k$-anonymity algorithm, which suppresses (removes) any tuple that does not appear at least $k$ times in the input dataset, intuitively protects against re-identification.  It, however, violates differential privacy, because differential privacy assumes precise background knowledge.  We hence want to relax differential privacy, by preventing the adversary from having a precise background knowledge.  This is the first angle to weaken differential privacy.

 \item
 \todo{Probably remove this part.}
Another thread from semantic $k$-anonymity.  Intuitively, re-identification is achieved when the input tuple can be replaced with one of many tuples in the input universe (assuming that such a universe exists), with the output being essentially the same.  This gives the second angle to relax differential privacy: DP requires $t$ is indistinguishable from any $t'$, we only require that $t$ indistinguishable from a large-enough set.  In reality, such a universe doesn't exist; hence we view the input dataset as a good-enough approximation of the input universe.

\todo{Not sure whether the first and second angles are connected in some way.}

 \item
A third angle to relax differential privacy is the worst-case bound.
\end{itemize}
}

%\section{What is a Privacy Breach?} \label{sec:background}
%\input{Background}

\comment{
\section{Background on Differential Privacy and $k$-Anonymity}

We give the background on differential privacy and $k$-Anonymity.  Differential privacy is an algorithmic privacy notion, and $k$-Anonymity is a syntactic privacy notion.  We first clarify what we mean by that.

In short, an algorithmic privacy notion applies to algorithms, while a syntactic privacy notion applies to datasets.
A syntactic privacy notion specifies a property on the output datasets, independent of how the datasets are generated.  One could apply a syntactic privacy notion to algorithms by saying that an algorithm satisfies the notion if and only if all its outputs satisfy it.  When checking whether an algorithm $\AA$ satisfies such a notion, one needs to examine the algorithm's outputs on each input independently; there is no restriction on the relationship between the algorithm's outputs on different inputs.
%To show that an algorithm $\AA$ satisfies a syntactic property, it is sufficient to check each possible output independently to ensure that the property is satisfied.  For example, one needs to show that every possible output of $\AA(D_1)$ satisfies the property, every output of $\AA(D_2)$ satisfies the property, but there is no restriction on the relationship between $\AA(D_1)$ and $\AA(D_2)$.

%Hence an output that satisfies $k$-anonymity can still reveal information about individual tuples.

An algorithmic property applies only to an algorithm, and not to an output.  It is not meaningful to say that an output dataset satisfies an algorithmic property, as the property relates the algorithm's behavior on different inputs.
}

\section{Differential Privacy Under Sampling}\label{sec:dps}

%In this section, we present our relaxation of differential privacy.

\subsection{Differential Privacy}~\label{sec:differential}

%\subsection{$\epsilon$-Differential Privacy}

Differential privacy formalizes the following protection objective: if a disclosure occurs when an individual participates in the database, then the same disclosure also occurs with similar probability (within a small multiplicative factor) even when the individual does not participate.  More formally, differential privacy requires that, given two input datasets that differ only in one tuple, the output distributions of the algorithm on these two datasets should be close.

%There are two ways to interpret the notion that two datasets differ in only one tuple.  The standard interpretation is that one dataset contains one more tuple than the other.  An alternative is to have the two datasets contain the same number of tuples, with all but one being the same.  In this paper we adopt the standard interpretation.

%\lbskip
\begin{definition}\textbf{[$\epsilon$-Differential Privacy~\cite{Dwo06,DMN+06} (\eDP)]:} \label{def:diff}
A randomized algorithm $\AA$ gives $\epsilon$-differential privacy if for any pair of neighboring datasets $D$ and $D'$, and any $O\subseteq  \mathsf{Range}(\AA)$,
%\lbskip
\begin{equation}\label{eqn:dp}
%e^{-\epsilon}   \leq \frac{\Pr[\AA(D)=S]}{\Pr[\AA(D')=S]} \leq e^{\epsilon}
\Pr[\AA(D)\in O] \leq e^{\epsilon}\Pr[\AA(D') \in O]
\end{equation}
\end{definition}

%Note that in this paper we always assume that datasets are multisets, i.e., the same tuple may appear multiple times.  We use $D_{-t}$ to denote the dataset with one copy of $t$ removed from $D$, and define $0/0$ to be $1$, and $p/0$ to be $\infty$ when $p>0$.

 \comment{
We point out that because inequality~\ref{eqn:dp} bounds the ratio on both sides, Definition~\ref{def:diff} also ensures that for any tuple $t'$, the ratio $\frac{\Pr[\AA(D)=S]}{\Pr[\AA(D_{+t'})=S]}$ is bounded in $[e^{-\epsilon},e^{\epsilon}]$, where $D_{+t'}$ denotes the dataset resulted from adding $t'$ to $D$.

Note that when $\epsilon$ is close to $0$, $e^{\epsilon}\approx 1+\epsilon$ is close to $1$.
We also overload set operators such as $\cup$ for multisets.  For example, when $D_1=D_2\cup \{t\}$, $D_1$ contains one more copy of $t$ than $D_2$, and when $D_1=D_2\setminus \{t\}$, $D_1$ contains one less copy of $t$ than $D_2$.
}
 \comment{
The above definition can be understood as bounding the adversary's odds in the following game.
\begin{definition}[The Differential Privacy Game]\label{def:game}
In this game, the adversary chooses a dataset $D_0$ and a tuple $t$, sends them to the Challenger, and receives an output.  There are two possible worlds.  In World 1, the adversary gets $\AA(D_1=D_0\cup \{t\})$.  In World 2, the adversary gets $\AA(D_2=D_0)$.  The adversary tries to guess which world he is in.  The \eDP notion requires that, for each output $S$, the adversary's odds of a successful guess is bounded by $e^{\epsilon}$:
\begin{equation}
\odds(S)= \max\left(\frac{\Pr[\AA(D_1)=S]}{\Pr[\AA(D_2)=S]}, \frac{\Pr[\AA(D_2)=S]}{\Pr[\AA(D_1)=S]}\right) \leq e^{\epsilon}. \label{eqn:dp}
\end{equation}
\end{definition}
 }

Intuitively, \eDP offers strong privacy protection.  If $\AA$ satisfies \eDP, one can claim that publishing $\AA(D)$ does not violate the privacy of any tuple $t$ in $D$, because even if one leaves $t$ out of the dataset, in which case the privacy of $t$ can be considered to be protected, one may still publish the same outputs with a similar probability.

\comment{
Differential privacy provides worst-case privacy guarantee in at least two senses.  First, inequality~(\ref{eqn:dp}) must hold for \emph{every} dataset $D$ and \emph{every} tuple $t$ in $D$, meaning that the privacy for every tuple in every dataset is protected.  Second, inequality~(\ref{eqn:dp}) requires that the $e^{\epsilon}$ bound holds for \emph{every} possible outcome $S$, even if $S$ occurs only with very low probability.
%Even when the $S$ that leaks the most information (i.e., gives the most uneven odds) is published, the level of %protection is still bounded by the parameter $\epsilon$.
These are desirable features, since as we have seen in past privacy incidents, compromising the privacy of a single individual may already be viewed unacceptable, and should be avoided~\cite{BZ06,NS08}.
}

In practice, \eDP can be too strong to satisfy in some scenarios.  A commonly used relaxation is to allow a small error probability $\delta$.
\begin{definition}\textbf{[$(\epsilon, \delta)$-Differential Privacy~\cite{DKM+06} (\edDP)]:} \label{def:edDP}
A randomized algorithm $\AA$ satisfies $(\epsilon,\delta)$-differential privacy, if for any pair of neighboring datasets $D$ and $D'$ and for any $O\subseteq \mathsf{Range}(\AA)$:
$$
%e^{-\epsilon}\leq \dfrac{Pr[\AA(D_1)=S]}{Pr[\AA(D_0)=S]} \leq e^\epsilon
Pr[\AA(D)\in O] \leq e^\epsilon Pr[\AA(D')\in O] + \delta
$$
\end{definition}

%Definition~\ref{def:pwindist} implies Definition~\ref{def:indist} for the same $(\epsilon,\delta)$~\cite{KS08}, and we will use it in this paper and call it $(\epsilon,\delta)$-differential privacy, \edDP for short.
%Papers using this includes~\cite{DNRRV}

Existing methods to satisfy differential privacy includes adding Laplace noise proportional to the query's global sensitivity~\cite{Dwo06,DMN+06}, adding noise related to the smooth bound of the query's local sensitivity~\cite{NRS07}, and the exponential mechanism to select a result among all possible results~\cite{MT07}.

 \comment{
\begin{definition}[$(\epsilon,\delta)$-probabilistic differential privacy] An algorithm \AA satisfies $(\epsilon,\delta)$-probabilistic differential privacy if for every dataset $D_1$ we can divide the output space $\Omega=\mathsf{Range}(\AA)$ into sets $\Omega_1$ and $\Omega_2$ such that
\begin{enumerate}
 \item
$\Pr[\AA(D_1) \in \Omega_2] \leq \delta$,
 \item
and for all $D_2$ that differs from $D_1$ in one tuple for all $O\in \Omega_1$:
$$e^{\epsilon} \Pr[\AA(D')=O] \leq \Pr[A(D)=O] \leq e^{\epsilon}\Pr[\AA(D')=O]$$
\end{enumerate}
\end{definition}
\todo{Add relationships among the three definitions.  Which implies which, with what parameters, and give the source. (The paper~\cite{KS08} is a good starting point.)}
%It has been shown~\cite{KS08} that Definition~this is weaker than the one in Definition~\ref{def:}, and we do not consider this variant in this paper.
 }

 \comment{
The \edDP notion allows that for some output $S$, the ratio $\frac{\Pr[\AA(D)=S]}{\Pr[\AA(D_{-t})=S]}$ does not need to be bounded.  In fact, it is possible that for some $S$, $Pr[\AA(D)=S]> 0$ and $Pr[\AA(D_{-t})=S]= 0$.  If any such $S$ occurs as output, one can immediately tell that the tuple $t$ is in the input.  In other words, \edDP gives up one aspect of \eDP's worst-case protection.  Under \edDP, total privacy compromises for some tuples can occur; however, the total probability that these ``bad'' outputs occur is bounded by $\delta$.  In Section~\ref{sec:smooth}, we will explore the potential problem with this relaxation and propose a way to strengthen it.
 }

%!TEX root = ./Privacy.tex

\subsection{Uncertain Background Knowledge} \label{sec:dps:uncertain}

One of our goals is to develop a further relaxation of differential privacy that can be more easily satisfied.  The intuition that we wanted to exploit is the adversary's uncertainty about the underlying dataset.  The \edDP notion ensures that when an adversary is uncertain about whether one tuple $t$ is present in the input dataset, even when the adversary knows the \emph{precise information} all other tuples in the input dataset, the adversary cannot tell based on the output whether $t$ is in the input or not.  We believe that it is reasonable to relax the assumption to that the adversary knows all attributes of a tuple $t$ (but not whether $t$ is in the dataset), and in addition statistical information about the rest of the dataset $D$.  The privacy notion should prevent such an adversary from substantially distinguishing between $D$ and $D\cup \{t\}$ based on the output.

%We want to ensure that the relaxation is easy to understand, can enable new private-preserving data publishing methods to be used, and does not weaken the practical quality of protection when applied in data publishing.

%We require that the relaxed notion should be robust under parallel composition and post processing, for which both \eDP and \edDP are.   A privacy definition is robust under post processing if and only if for any algorithm \AA that satisfies the privacy notion, adding any post-processing steps, i.e., applying another algorithm to the output of \AA, and then publishing it, also satisfies the privacy notion.

The desire to exploit adversary's uncertainty is shared by other researchers.  For example, Adam Smith's blog post summarizing the Workshop on Statistical and Learning-Theoretic Challenges in Data Privacy includes a section on relaxed definitions of privacy with meaningful semantics: ``it would be nice to see meaningful definitions of privacy in statistical databases that exploit the adversary's uncertainty about the data.  The normal approach to this is to specify a set of allowable prior distributions on the data (from the adversary's point of view).  However, one has to be careful.  The versions I have seen are quite brittle.''\footnote{\textit{http://adamdsmith.wordpress.com/2010/03/04/ipam-workshop-wrap-up/}}

%Examining privacy incidents that have occurred in the past, \todo{such as}, we observe that while the adversary may indeed use arbitrary information about a tuple $t$ in a re-identification attack, no adversary has \emph{precise} information about other tuples in the input dataset.  Some attacks rely on no information about other tuples (except that they are significantly different from $t$); while other attacks use statistical information about the rest of the dataset.

%\subsection{Differential Privacy with Uncertain Background Knowledge}

%Intuitively, if an algorithm satisfies \edDP with a random sampling step, it should provide privacy protection without the sampling step, when it is reasonable to assume that the adversaries do not have precise knowledge about all tuples in the dataset.  We note that other $k$-anonymization techniques would not satisfy \edDP, even with the random sampling step.

Some degree of brittleness may be unavoidable.  It appears that any privacy notion that takes advantage of the adversary's uncertainty about the data is not robust under composition, which requires that given two algorithms that both satisfy the privacy notion, their composition, i.e., applying both algorithms to the same input dataset and then publish both outputs, also satisfies the privacy notion.

Consider the following two algorithms.  Let $r(D)$ be the predicate that $D$ contains an odd number of tuples, and $s(D)$ be a sensitive predicate, e.g., whether a tuple $t$ is in $D$.  Algorithm $\AA_1(D)$ outputs $r(D)$, and $\AA_2(D)$ outputs $r(D)$ XOR $s(D)$.  Both $\AA_1$ and $\AA_2$ should satisfy a privacy notion that assumes that the adversary is uncertain about the data, because there is no reason that the adversary should know the exact number of the tuples.  However, the composition of $\AA_1$ and $\AA_2$ leaks $r(D)$.  More generally, for any privacy  definition that exploits the adversary's uncertainty about data, there exists at least one predicate that the adversary is uncertain about.  Then one algorithm can output that predicate, and a second algorithm can output that predicate XOR's with a predicate that results in privacy leakage; and they does not compose.

The above observation suggests that no such definition should be used in the interactive setting of answering multiple queries.  If, however, one intends to publish a dataset in the non-interactive setting only once, then the inability to compose may be an acceptable limitation.

\subsection{Differential Privacy under Sampling}

One natural approach to capturing the adversary's uncertainty about the input data is to add a sampling step.  We introduce the following definition, called $(\beta,\epsilon,\delta)$-Differential Privacy under Sampling (\bedDPS for short).

\begin{definition}[Differential privacy under sampling]\label{def:DPS}
An algorithm $\AA$ gives \bedDPS if and only if $\beta>\delta$ and the algorithm $\AA^\beta$ gives \edDP, where $\AA^\beta$ denotes the algorithm to first sample with probability $\beta$ (include each tuple in the input dataset with probability $\beta$), and then apply \AA to the sampled dataset.  %When $\delta=0$, we say the algorithm satisfies $(\beta,\epsilon)$-DPS.
\end{definition}

%An algorithm satisfies $(\beta,\epsilon,\delta)$-DPS, if preceding it with a sampling step with probability $\beta$, it can satisfy $\edDP$.   We allow $\delta=0$, in which case one gets a variant of \eDP.

%In this section, we prove one result related to the ($(\beta,\epsilon,\delta)$-DPS notion, that is, sampling can amplify privacy protection for a differentially private mechanism.

%Obviously, \bedDPS is a relaxed version of \edDP.  Any algorithm that satisfies \edDP also satisfies \bedDPS.
The above definition requires $\beta>\delta$ because any algorithm trivially satisfies \DPS{(\beta,0,\delta)} when $\beta \leq \delta$.  This is because when two datasets differ only by one tuple, sampling from them with the probability $\beta$ will result in exactly the same output with probability $1-\beta$.  However, when $\beta \gg \delta$, the notion of \bedDPS is both nontrivial to satisfy and a nontrivial relaxation of \edDP, as shown by our results in Section~\ref{sec:result}.  There we show that existing $k$-anonymization algorithms do not satisfy \bedDPS, and have privacy vulnerabilities, and that safe (and possibly deterministic) $k$-anonymization satisfies \bedDPS, while violating \edDP for any $\delta<1$.

%
%is to view it as a convenient definition to analyze the privacy properties of algorithms when applied to randomly sampled datasets.  When one proves that an algorithm provides \bedDPS, then one can apply the algorithm to the sampled dataset.
%
%Given a dataset that does not result from sampling,
%
%Finally, one may view \bedDPS as a privacy notion strong enough to be applied to the dataset when no explicit sampling is involved, because the adversary is assumed to be uncertain about the dataset.

\subsection{The Amplification Effect of Sampling}\label{sec:udp:amp}

An interesting feature of the \bedDPS notion is that there is a connection between the privacy parameters $\epsilon,\delta$ and the sampling rate $\beta$.  The following theorem shows that by employing a smaller sampling rate, one can achieve a stronger privacy protection (i.e., smaller values for $\epsilon$ and $\delta$).

\begin{theorem}\label{thm:amplification}
Any algorithm that satisfies $\DPS{(\beta_1,\epsilon_1,\delta_1)}$ also satisfies $\DPS{(\beta_2,\epsilon_2,\delta_2)}$ for any $\beta_2 < \beta_1$, where \\ $\epsilon_2=\ln\left(1+\left(\frac{\beta_2}{\beta_1}(e^{\epsilon_1}-1)\right)\right)$, and $\delta_2=\frac{\beta_2}{\beta_1}\delta_1$.
\end{theorem}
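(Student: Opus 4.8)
The plan is to reduce the statement to a single ``amplification by subsampling'' lemma, using the fact that independent Bernoulli sampling composes multiplicatively: including each input tuple with probability $\beta_2$ has exactly the same effect as first including it with probability $\gamma := \beta_2/\beta_1$ and then, on whatever survives, including it with probability $\beta_1$. Hence $\AA^{\beta_2}$ and $(\AA^{\beta_1})^{\gamma}$ have identical output distributions on every input. Since $\AA$ satisfies $\DPS{(\beta_1,\epsilon_1,\delta_1)}$, the algorithm $\BB := \AA^{\beta_1}$ satisfies $(\epsilon_1,\delta_1)$-differential privacy, and $\gamma\in(0,1)$ because $\beta_2<\beta_1$. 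So it suffices to prove the lemma: if $\BB$ is $(\epsilon_1,\delta_1)$-DP and $\gamma\in(0,1)$, then $\BB^{\gamma}$ is $(\epsilon_2,\delta_2)$-DP with $e^{\epsilon_2}=1+\gamma(e^{\epsilon_1}-1)$ and $\delta_2=\gamma\delta_1$. The side requirement $\beta_2>\delta_2$ of Definition~\ref{def:DPS} is then immediate, since $\delta_2=\gamma\delta_1<\gamma\beta_1=\beta_2$ using $\beta_1>\delta_1$ (part of $\AA$ satisfying $\DPS{(\beta_1,\epsilon_1,\delta_1)}$).

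To prove the lemma, fix neighboring datasets; say $D'=D\cup\{t\}$, and fix $O\subseteq\mathsf{Range}(\BB)$. Let $T$ denote the random $\gamma$-subsample of $D$. The key structural observation is that, since sampling treats tuples independently, the $\gamma$-subsample of $D'$ is distributed as $T$ with probability $1-\gamma$ and as $T\cup\{t\}$ with probability $\gamma$. Writing $a_T=\Pr[\BB(T)\in O]$ and $b_T=\Pr[\BB(T\cup\{t\})\in O]$ and conditioning on $T$,
\[
\Pr[\BB^{\gamma}(D)\in O]=\mathbb{E}_{T}[a_T],\qquad
\Pr[\BB^{\gamma}(D')\in O]=\mathbb{E}_{T}\bigl[(1-\gamma)a_T+\gamma b_T\bigr].
\]
For every fixed $T$ the datasets $T$ and $T\cup\{t\}$ are neighboring, so $(\epsilon_1,\delta_1)$-DP of $\BB$ gives both $b_T\le e^{\epsilon_1}a_T+\delta_1$ and $a_T\le e^{\epsilon_1}b_T+\delta_1$.

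One direction is easy: $b_T\le e^{\epsilon_1}a_T+\delta_1$ yields, pointwise, $(1-\gamma)a_T+\gamma b_T\le\bigl(1+\gamma(e^{\epsilon_1}-1)\bigr)a_T+\gamma\delta_1=e^{\epsilon_2}a_T+\delta_2$; taking expectations over $T$ gives $\Pr[\BB^{\gamma}(D')\in O]\le e^{\epsilon_2}\Pr[\BB^{\gamma}(D)\in O]+\delta_2$. The main obstacle is the reverse direction, which is not symmetric: I must show $a_T\le e^{\epsilon_2}\bigl((1-\gamma)a_T+\gamma b_T\bigr)+\delta_2$ pointwise. Here I substitute the lower bound $b_T\ge e^{-\epsilon_1}(a_T-\delta_1)$ (valid since also $b_T\ge 0$), which turns the right-hand side into at least $e^{\epsilon_2}\bigl(1-\gamma+\gamma e^{-\epsilon_1}\bigr)a_T+\gamma\delta_1\bigl(1-e^{\epsilon_2-\epsilon_1}\bigr)$. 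The crux is the elementary identity $(e^{\epsilon_1}-1)-(1-e^{-\epsilon_1})=(e^{\epsilon_1}-1)(1-e^{-\epsilon_1})$ (both sides equal $(e^{\epsilon_1}-1)^2/e^{\epsilon_1}$), which simplifies the coefficient of $a_T$ to $1+\gamma(1-\gamma)(e^{\epsilon_1}-1)(1-e^{-\epsilon_1})\ge 1$; and the remaining additive term is nonnegative because $e^{\epsilon_2}=1+\gamma(e^{\epsilon_1}-1)\le e^{\epsilon_1}$ (as $\gamma<1$). Taking expectations over $T$ then gives $\Pr[\BB^{\gamma}(D)\in O]\le e^{\epsilon_2}\Pr[\BB^{\gamma}(D')\in O]+\delta_2$. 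The two directions together show $\BB^{\gamma}=\AA^{\beta_2}$ is $(\epsilon_2,\delta_2)$-DP, which, with the $\beta_2>\delta_2$ check above, completes the proof.
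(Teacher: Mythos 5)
Your proof is correct and follows essentially the same route as the paper's: both factor $\AA^{\beta_2}$ as a $\gamma=\beta_2/\beta_1$ subsample followed by $\AA^{\beta_1}$, decompose the larger dataset's output distribution as $(1-\gamma)$ times the smaller's plus $\gamma$ times the ``$t$ included'' branch, and handle the harder direction via the same algebraic inequality $e^{\epsilon_2}\left(1-\gamma+\gamma e^{-\epsilon_1}\right)\geq 1$ together with $e^{\epsilon_2}\leq e^{\epsilon_1}$ for the $\delta$ term. Your explicit check that $\beta_2>\delta_2$ is a small bonus the paper omits, but the substance is identical.
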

See Appendix~\ref{app:amplification} for the proof.

An equivalent way to write $\epsilon_2=\ln\left(1+\left(\frac{\beta_2}{\beta_1}(e^{\epsilon_1}-1)\right)\right)$ is
$$\frac{e^{\epsilon_2}-1}{e^{\epsilon_1}-1} = \frac{\beta_2}{\beta_1}.$$
In other words, by decreasing the sampling probability, one obtains proportional decreases in $e^{\epsilon}-1$ and $\delta$, improving the privacy protection.  Hence, when one possesses a randomly sampled dataset, then one can use much relaxed privacy budget $\epsilon$ and error toleration $\delta$.  To see the effects of this, in Table~\ref{table:sampling} we show the privacy parameters for an algorithm that satisfies $\DP{(\ln 11, 10^{-5})}$, and an algorithm that satisfies $\DP{(1,0)}$ under sampling rate $0.1$ and $0.01$.

%A natural corollary of the above theorem is the following.
%\begin{cor}
%Given an algorithm $\AA$ that satisfies \eDP, it satisfies $(\beta,\epsilon')$-DPS for $\epsilon'=\ln(1+\beta(e^\epsilon-1))$.
%\end{cor}
%An equivalent way to write $\epsilon'=\ln(1+\beta(e^\epsilon-1))$ is
%$$\frac{e^{\epsilon'}-1}{e^{\epsilon}-1} = \beta.$$
\begin{table}
\begin{center}
\begin{tabular}{lr}
$
\begin{array}{|c|c|c|c|}
\hline
\beta & e^{\epsilon} & \epsilon & \delta
\\ \hline
1 & 11 & \ln 11\approx 2.40   & 10^{-5}
\\
0.1 &  2 & \ln 2\approx 0.69 & 10^{-6}
\\
0.01 & 1.1 & \ln 1.1 \approx 0.095 & 10^{-7}
\\ \hline
\end{array}
$
&
$
\begin{array}{|c|c|}
\hline
\beta & \epsilon
\\ \hline
1 & 1
\\
0.1 & 0.159
\\
0.01 & 0.017
\\ \hline
\end{array}
$
\end{tabular}
\end{center}
\caption{Effect of privacy parameters under sampling.}\label{table:sampling}
\end{table}

\comment{
Chaudhuri and Mishra \cite{CM06} used the random sampling in a different way to achieve $(1,\epsilon, \delta)$-privacy, another relaxation of differential privacy. First suppress those rare tuples in the dataset to form a safe dataset, then do random sampling with a rate $\beta$ on a such dataset. They proved that if the sampling rate is at most $O(\epsilon)$, the whole process can achieve $(1,O(\epsilon),\delta)$-privacy. But this approach assumes that there are few rare values in the original dataset. This assumption makes it difficult to apply this approach in publishing a dataset with distinct tuples, such as the search log dataset where every user's searche histories are almost distinct from others'. Another limitation is the result the sampling rate $\beta$ varies linearly as the failure probability $\delta$. But as Korolova et al. \cite{KKMN09} suggested, the typical value of failure probability should be $O(1/\mbox{number of tuples})$ which will make the above sampling probaility $\beta$ useless.
}

Smith's blog~\footnote{\emph{http://adamdsmith.wordpress.com/2009/09/02/sample-secrecy/}} includes an ``amplification'' lemma for differential privacy, which was used implicitly in the design of a PAC learner for the parity class in~\cite{KLN+08}.  The lemma states that an algorithm that satisfies $\DP{(\epsilon=1)}$, when preceded by random sampling with rate $\beta$, satisfies $\DP{(2\beta)}$.  Theorem~\ref{thm:amplification} exploits similar observations, but is more general in that it applies to \edDP, rather than \eDP, and that it also applies to arbitrary values of $\epsilon$.  Our result is also slightly tighter; for example, for the special case of $\epsilon=1$ and $\beta=0.1$, we give a result of $0.159$ as opposed to $2\beta=0.2$.

\subsection{Properties of \bedDPS}

While the \bedDPS notion does not compose.  It does have several other desirable properties.  In~\cite{KL10}, Kifer and Lin identified two privacy axioms when they defined the generic differential privacy.  The \emph{Transformation Invariance} axiom states that given an algorithm $\AA$ that satisfies a privacy notion, adding any post-processing step operating on $\AA$'s output should still satisfy the privacy notion.  The \emph{Privacy Axiom of Choice} axiom states that given two algorithms $\AA_1$ and $\AA_2$ that both satisfy a privacy notion, then a new algorithm that chooses $\AA_1$ with probability $p$ and $\AA_2$ with probability $1-p$ should also satisfy the notion.
%Theorems~\ref{thm:postprocessing} and~\ref{thm:convexity} show that \bedDPS is an instance of the generic differential privacy framework identified in \cite{KL10}.
We now show that \bedDPS satisfies both axioms.

\begin{theorem}\label{thm:postprocessing}
Given $\AA_1$ that satisfies $\DPS{(\beta,\epsilon,\delta)}$ and any algorithm $\AA_2$, $\AA(D)=\AA_2(\AA_1(D))$ satisfies $\DPS{(\beta,\epsilon,\delta)}$.
\end{theorem}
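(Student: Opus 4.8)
The plan is to unwind Definition~\ref{def:DPS} and reduce the statement to the ordinary post-processing property of $\edDP$. By definition, $\AA_1$ satisfying $\DPS{(\beta,\epsilon,\delta)}$ means $\beta>\delta$ and that $\AA_1^\beta$ — sample each input tuple independently with probability $\beta$, then run $\AA_1$ — satisfies $\edDP$. Since the sampling coins and the coins of $\AA_2$ are independent, the algorithm $\AA^\beta$ that first samples with probability $\beta$ and then runs $\AA=\AA_2\circ\AA_1$ is exactly $\AA_2\circ\AA_1^\beta$. The condition $\beta>\delta$ is inherited verbatim, so it remains only to show that $\AA_2\circ\AA_1^\beta$ satisfies $\edDP$; that is, that composing an $\edDP$ algorithm with an arbitrary (possibly randomized) map applied to its output preserves $\edDP$.

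I would then prove this post-processing lemma directly. Fix neighboring datasets $D,D'$ and a measurable $O\subseteq\mathsf{Range}(\AA)$. Let $\mu_D$ and $\mu_{D'}$ denote the output distributions of $\AA_1^\beta$ on $D$ and $D'$, and let $g(y)=\Pr[\AA_2(y)\in O]$, a function taking values in $[0,1]$. Then $\Pr[\AA(D)\in O]=\int g\,d\mu_D$ and likewise for $D'$. Because $\AA_1^\beta$ is $\edDP$, for every measurable set $T$ in its range we have $\mu_D(T)\le e^\epsilon\mu_{D'}(T)+\delta$. Applying this to $T=\{y:g(y)>t\}$ for each $t$ and integrating over $t\in[0,1]$ via the layer-cake identity $\int g\,d\mu=\int_0^1\mu(\{g>t\})\,dt$ (valid since $0\le g\le 1$) gives
\[
\int g\,d\mu_D=\int_0^1\mu_D(\{g>t\})\,dt\le\int_0^1\bigl(e^\epsilon\mu_{D'}(\{g>t\})+\delta\bigr)\,dt=e^\epsilon\int g\,d\mu_{D'}+\delta,
\]
hence $\Pr[\AA(D)\in O]\le e^\epsilon\Pr[\AA(D')\in O]+\delta$, and $\AA$ satisfies $\DPS{(\beta,\epsilon,\delta)}$.

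This argument is entirely routine, so there is no real obstacle; the only things to be careful about are bookkeeping. One should justify the commutation $\AA^\beta=\AA_2\circ\AA_1^\beta$ from independence of the coin tosses, and one should make sure the treatment of a randomized $\AA_2$ is clean. If one prefers to sidestep the layer-cake integral, an equivalent route is to condition on the internal randomness $\rho$ of $\AA_2$: for fixed $\rho$ the post-processor is deterministic, so $O_\rho=\{y:\AA_2(y;\rho)\in O\}$ is a measurable event to which the $\edDP$ inequality for $\AA_1^\beta$ applies directly, and averaging over $\rho$ (and exchanging expectation with the inequality, using that $\delta$ is a constant) yields the claim. In short, this is just the standard fact that differential privacy is immune to post-processing, transported through the sampling wrapper, together with the observation that $\beta>\delta$ is preserved.
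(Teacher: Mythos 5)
Your proof is correct, but it proceeds differently from the paper's. You argue directly: writing $\Pr[\AA(D)\in O]=\int g\,d\mu_D$ with $g(y)=\Pr[\AA_2(y)\in O]$, you slice by the \emph{value of the acceptance probability} via the layer-cake identity, apply the $\edDP$ inequality to each superlevel set $\{g>t\}$, and integrate over $t\in[0,1]$ (the unit length of the interval is exactly what keeps the additive term equal to $\delta$). The paper instead argues by contradiction and slices the output space of $\AA_1^\beta$ by the \emph{likelihood ratio}: it partitions $\mathsf{Range}(\AA_1)$ into $\mathcal{S}_1=\{S: p(S)>e^\epsilon p'(S)\}$ and its complement, shows the assumed violation must be concentrated on $\mathcal{S}_1$, and then adds the inequality $\sum_{\mathcal{S}_1}p(S)(1-q(S))\geq e^\epsilon\sum_{\mathcal{S}_1}p'(S)(1-q(S))$ to conclude that the event $\mathcal{S}_1$ itself already violates $\edDP$ for $\AA_1^\beta$. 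Both arguments are sound. Yours is the more standard and more general treatment (it handles non-discrete ranges and randomized post-processors without modification, and your alternative of conditioning on $\AA_2$'s coins is equally clean); the paper's has the minor virtue of explicitly exhibiting the witnessing bad event $\mathcal{S}_1$, at the cost of working with pointwise sums over a discrete range and a small sloppiness in asserting a strict inequality that degenerates to equality when $q(S)=1$. Your bookkeeping points --- that $\AA^\beta=\AA_2\circ\AA_1^\beta$ by independence of the coins, and that $\beta>\delta$ carries over verbatim --- are exactly the right things to check and are handled correctly.
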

\begin{proof}
Assume, for the sake of contradiction, that $\AA(D)$ does not satisfy \bedDPS, then there exist neighboring $D$ and $D'$ and
$O\subseteq \mathsf{Range}(\AA_2)$ such that
$$\Pr[\AA_2(\AA_1^\beta(D)) \in O] > e^{\epsilon}  \Pr[\AA_2(\AA_1^\beta(D')) \in O] + \delta$$

Consider all $S$'s in $\mathsf{Range}(\AA_1)$, let $q(S)=\Pr[\AA_2(S)\in O]$, and let $p(S)=\Pr[\AA_1^\beta(D)=S]$ and $p'(S)=\Pr[\AA_1^\beta(D')=S]$. Then we have
$$\sum_{S\in \mathsf{Range}(\AA_1)} p(S)q(S) > e^\epsilon \sum_{S\in \mathsf{Range}(\AA_1)} p'(S)q(S) + \delta.$$

We partition $\mathsf{Range}(\AA_1)$ into
$\mathcal{S}_1=\{ S \mid p(S) > e^\epsilon p'(S)\}$ and $\mathcal{S}_2=\{ S \mid p(S) \leq e^\epsilon p'(S)\}$.
Rewriting the above inequality, we have
$$
\begin{array}{rl}
  & \sum_{S\in \mathcal{S}_1} p(S)q(S) + \sum_{S\in \mathcal{S}_2} p(S)q(S)
\\
> & e^\epsilon\sum_{S \in \mathcal{S}_1} p'(S)q(S)+ e^\epsilon\sum_{S \in \mathcal{S}_2} p'(S)q(S) + \delta
\end{array}
$$
Consider the sum over $\mathcal{S}_2$, we have
$$\sum_{S\in \mathcal{S}_2} p(S)q(S) \leq e^\epsilon \sum_{S \in \mathcal{S}_2} p'(S)q(S)$$
Subtracting the above from previous, we have
$$\sum_{S\in \mathcal{S}_1} p(S)q(S) > e^\epsilon\sum_{S \in \mathcal{S}_1} p'(S)q(S)+ \delta$$
For each $S\in \mathcal{S}_1$, we have $p(S)(1-q(S)) > e^\epsilon p'(S)(1-q(S))$, and thus
$$\sum_{S\in \mathcal{S}_1} p(S)(1-q(S)) > e^\epsilon\sum_{S \in \mathcal{S}_1} p'(S)(1-q(S))$$
Summing up the above two inequalities, we have
$$\sum_{S\in \mathcal{S}_1} p(S) > e^\epsilon\sum_{S \in \mathcal{S}_1} p'(S) + \delta$$
This contradicts that $\AA_1$ satisfies \bedDPS.
\end{proof}

\begin{theorem}\label{thm:convexity}
Given two algorithms $\AA_1$ and $\AA_2$ that both satisfy \bedDPS, for any $p\in [0,1]$, let $\AA_p(D)$ be the algorithm that outputs $\AA_1(D)$ with probability $p$ and $\AA_2(D)$ with probability $1-p$, then $\AA_p$ satisfies \bedDPS.
\end{theorem}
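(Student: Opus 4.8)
The plan is to exploit the fact that the initial random sampling step commutes with the probabilistic choice between $\AA_1$ and $\AA_2$: the composed algorithm $(\AA_p)^\beta$ turns out to be exactly the algorithm that runs $\AA_1^\beta$ with probability $p$ and $\AA_2^\beta$ with probability $1-p$. Once this is established, $(\epsilon,\delta)$-differential privacy of the mixture follows from the $(\epsilon,\delta)$-differential privacy of each $\AA_i^\beta$ by linearity.

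In detail, I would first unfold Definition~\ref{def:DPS}: to show $\AA_p$ satisfies \bedDPS it suffices to verify $\beta>\delta$, which is inherited directly from the hypothesis on $\AA_1$, and that $(\AA_p)^\beta$ satisfies \edDP. The algorithm $(\AA_p)^\beta$ first includes each tuple of the input with probability $\beta$ to obtain a sampled dataset $T$, then with probability $p$ outputs $\AA_1(T)$ and with probability $1-p$ outputs $\AA_2(T)$. Since the biased coin that selects between $\AA_1$ and $\AA_2$ is independent of the sampling and of the internal randomness of the two algorithms, this induces the same output distribution as: with probability $p$ run $\AA_1^\beta$ on the input, and with probability $1-p$ run $\AA_2^\beta$ on the input.

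Next, I would fix neighboring datasets $D$ and $D'$ and any set $O\subseteq \mathsf{Range}(\AA_p)=\mathsf{Range}(\AA_1)\cup\mathsf{Range}(\AA_2)$. By the law of total probability,
$$\Pr[(\AA_p)^\beta(D)\in O] = p\,\Pr[\AA_1^\beta(D)\in O] + (1-p)\,\Pr[\AA_2^\beta(D)\in O].$$
Applying the \edDP guarantee of $\AA_1^\beta$ and of $\AA_2^\beta$ to the two terms (parts of $O$ lying outside a given algorithm's range simply contribute $0$), each $\Pr[\AA_i^\beta(D)\in O]$ is at most $e^\epsilon\Pr[\AA_i^\beta(D')\in O]+\delta$; recombining, the error terms sum to $p\delta+(1-p)\delta=\delta$ and the $e^\epsilon$-scaled terms sum to $e^\epsilon\Pr[(\AA_p)^\beta(D')\in O]$. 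This gives $\Pr[(\AA_p)^\beta(D)\in O]\le e^\epsilon\Pr[(\AA_p)^\beta(D')\in O]+\delta$, i.e. $(\AA_p)^\beta$ satisfies \edDP, completing the proof.

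I do not expect a genuine obstacle here; the one point requiring care is the commutation claim, namely that moving the independent $p$-versus-$(1-p)$ coin flip past the sampling step leaves the output distribution unchanged, and this is immediate from independence of the sources of randomness. Equivalently, the theorem is just the Privacy Axiom of Choice for plain \edDP applied to the pair of algorithms $\AA_1^\beta$ and $\AA_2^\beta$, together with the trivial observation that $\beta>\delta$ is preserved.
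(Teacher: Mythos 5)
Your proposal is correct and follows essentially the same route as the paper's proof: decompose $\Pr[(\AA_p)^\beta(D)\in O]$ by the law of total probability into the $p$- and $(1-p)$-weighted contributions of the two component algorithms, apply the $(\epsilon,\delta)$ bound to each, and recombine. The only difference is that you make explicit the commutation of the sampling step with the coin flip (which the paper leaves implicit by suppressing the $\beta$ superscripts), a small but welcome extra bit of care.
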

\begin{proof}
Since both $\mathcal{A}_1$ and $\mathcal{A}_2$ satisfy \bedDPS, for any pair of neighboring datasets $D$ and $D^{\prime}$ and for any $O\in \mathsf{Range}(\mathcal{A}_1)\cup\mathsf{Range}(\mathcal{A}_2)$, we have
$$
\begin{array}{rl}
 & \Pr[\mathcal{A}_{p}(D)\in O]
\\
=& p\Pr[\mathcal{A}_1(D)\in O]+(1-p)\Pr[\mathcal{A}_2(D)\in O]\\
\leq & p(e^{\epsilon}\Pr[\mathcal{A}_1(D^{\prime})\in O]+\delta)+(1-p)( e^{\epsilon}\Pr[\mathcal{A}_2(D^{\prime})\in O]+\delta)\\
 = & e^{\epsilon}(p\Pr[\mathcal{A}_1(D^{\prime})\in O]+(1-p)\Pr[\mathcal{A}_2(D^{\prime})\in O])+\delta\\
 = & e^{\epsilon}\Pr[\mathcal{A}_{p}(D^{\prime})\in O]+\delta.
\end{array}
$$
Therefore, the algorithm $\mathcal{A}_p$ also satisfies \bedDPS.
\end{proof}

\comment{
\todo{Decide whether to include the following about composability between \edDP and \bedDPS.}

\begin{theorem}
Given $\AA_1$ that satisfies $\DP{(\epsilon_1,\delta_1)}$, and $\AA_2$ that satisfies $\DPS{(\beta,\epsilon_2,\delta_2)}$, then $\AA(D)=(\AA_1(D);\AA_2(D))$ satisfies $\DPS{(\beta,\ln\left(1+\beta\left(e^{\epsilon_1}-1\right)\right)+\epsilon_2,\beta\delta_1+\delta_2)}$.
\end{theorem}
\begin{proof}
For any neighboring $D,D'$, and any $O=O_1;O_2$, $\AA(D)\in O$ when $\AA_1(D)\in O_1$ and $\AA_2(D)\in O_2$. We have
$$
\begin{array}{rl}
 \frac{\Pr[(\AA_1;\AA_2)^\beta(D)\in O]}{\Pr[(\AA_1;\AA_2)^\beta(D')\in O]}= &  \frac{\Pr[(\AA_{1}^\beta(D);\AA_{2}^{\beta}(D))=(S_1;S_2)]}{\Pr[(\AA_{1}^{\beta}(D_{-t});\AA_{2}^{\beta}(D_{-t}))=(S_1;S_2)]}
 \\
 = & \frac{\Pr[\AA_{1}^{\beta}(D)=S_1]}{\Pr[\AA_{1}^{\beta}(D_{-t})=S_1]} \frac{\Pr[\AA_{2}^{\beta}(D)=S_2]}{\Pr[\AA_2^{\beta}(D_{-t})=S_2]}
\end{array}
$$
\end{proof}
}

\subsection{More Non-Composability}

From observations in Section~\ref{sec:dps:uncertain}, we expect that \bedDPS does not compose.  However, one would expect that combining an algorithm that satisfies \DPS{(\beta,\epsilon_1,\delta)} and one that satisfies \DP{\epsilon_2} should result in an algorithm that satisfies the weaker \bedDPS, where $\epsilon$ is some function of $\epsilon_1$ and $\epsilon_2$.  Such a weaker form of composability is useful in that given a dataset that is resulted from random sampling, one can publish it in a way that satisfies \bedDPS, while at the same time answering queries using mechanisms that satisfy \eDP.  Surprisingly, even such a weak form of composability does not hold.

Consider the following two algorithms operating on datasets in which each tuple has two fields: gender and name.  Let $r(D)$ be the predicate that $D$ contains more male than female, and $s(D)$ be a sensitive predicate, such as whether $D$ contains a specific tuple.  The algorithm $\AA_1(D)$ outputs $r(D)$ XOR $s(D)$ when $D$ contains a sufficient number of tuples (say, $1000$), and outputs false otherwise.  And $\AA_2(D)$ outputs the percentage of tuples in $D$ that are male with Laplacian noise~\cite{DMN+06}.

Clearly $\AA_2(D)$ satisfies \eDP.  $\AA_1$ satisfies \bedDPS for any $\beta$ that is not too close to $1$. 
Let $T$ and $T'$ be the random variables resulted from sampling from $D$ and $D'$ respectively. 
Only when the dataset size is large enough, would $\AA_1$ output information that depends on the input data.  When $D$ and $D'$ contain a large number of tuples and differ only by one, $r(T)$ and $r(T')$ have essentially the same distribution, taking the value true with probability very close to $0.5$, making $\AA_1(T)$ and $\AA_1(T')$ having a similar distribution.  
%the distributions of the predicate $r$ on datasets sampled from $D$ and from $D'$ will be essentially the same.
Combining $\AA_1$ and $\AA_2$, however, is non-private.  Using $\AA_2(T)$ one obtains a highly accurate estimate of the predicate $r(T)$, enabling the adversary to learn $s(T)$ with high probability.  

% \ignore{
More specifically, let $D$ and $D'$ be two datasets such that $s(D)$ is false, $s(D')$ is true (i.e., $D'$ contains the tuple we are checking), and they each contain 10,000 tuples, half male and half female.  Consider sampling probability $\beta=0.5$, and the event that $\AA_1$ outputs false, and $\AA_2$ outputs $p\geq 0.5$.  Let $T$ and $T'$ be the random variables resulted from sampling from $D$ and $D'$ respectively, then we have
$$
\begin{array}{rl}
\Pr[s(T)=\mbox{true}] &=0
\\
 \Pr[s(T')=\mbox{true}] & =1/2
 \\  
\Pr[r(T)=\mbox{true} \mid \AA_2(T)\geq 0.5] & \approx 1,
\\
\Pr[r(T')=\mbox{true} \mid \AA_2(T') \geq 0.5] & \approx 1
\end{array}
$$
and
$$
\begin{array}{rl}
 & \Pr[\AA_2(T)\geq 0.5 \wedge \AA_1(T)=\mbox{false}]
 \\
= & \Pr[\AA_2(T)\geq 0.5] \Pr[ r(T)=s(T) \mid \AA_2(T)\geq 0.5]
 \\
\approx & \Pr[\AA_2(T)\geq 0.5] \Pr[ s(T) = \mbox{true}]
 \\
= &  0,
\end{array}
$$
while
$$
\begin{array}{rl}
&\Pr[\AA_2(T')\geq 0.5 \wedge \AA_1(T')=\mbox{false}]
\\
\approx & \Pr[\AA_2(T')\geq 0.5] \Pr[ s(T')=\mbox{true} \mid \AA_2(T')\geq 0.5]
\\
= & \Pr[\AA_2(T')\geq 0.5] \Pr[ s(T')=\mbox{true}]
\\
\approx & 1/4.
\end{array}
$$
%}

This result is somewhat surprising.  After all, any mechanism that satisfies \eDP should not be leaking private information about the underlying datasets.  How could adding a differentially private mechanism destroys the privacy protection of another mechanism?
Our understanding is that satisfying \bedDPS can be achieved by relying on the adversary's uncertainty.  The adversary knows only that the dataset is from a large set of candidates.  While \eDP ensures that adjacent datasets are difficult to distinguish, these candidates are not all adjacent and can indeed be quite far apart.  Hence obtaining one \eDP answer may dramatically change the probability of which candidates are possible, removing some degree of uncertainty, destroying any privacy protection that relies on exactly that uncertainty.

This inability for a \bedDPS mechanism to compose with a \eDP mechanism suggests that \bedDPS mechanisms should be applied alone.  Hence they are not suitable for the interactive mode, but only suitable for the non-interactive mode of data publishing. 
%One should also be careful with privacy notions in the general differential privacy framework in~\cite{KL10}.
Furthermore, it also suggests that mechanisms satisfying \eDP should be used carefully as well, as its output may break other mechanisms' (albeit weaker) privacy guarantees.

\subsection{Benefiting from Sampling}

We observe that in many data publishing scenarios, random sampling is an inherent step.  For example, the census bureau publishes a 1-percent microdata sample.  In many research settings (such as when Netflix wants to publishing movie ratings), it is sufficient to publish a random sample of the dataset.  Many times, even when the dataset is not the result of explicit sampling, one can view it as result of implicit sampling, because the process of selecting respondents involves randomness.

%There is a different interpretation, however, which I like more. It says that if you are doing a survey and you can reasonably expect that the sample itself will remain secret, then you get {\epsilon}-differential privacy for very small {\epsilon} essentially for free. This formalizes a common intuition among, say, statisticians at the census bureau, that the very uncertainty about which members of the US population were surveyed (for long form data) provides a large degree of protection.

%These results suggest that while one can benefit from the adversary's uncertainty about the underlying dataset, but one has to be careful in doing so.

%In existing private data analysis approaches, the potential benefit of such random sampling, however, is wasted.  With a 1-percent random sampling rate, one can obtain a 100-hold increase in privacy budget.
The natural question is how one can benefit from such explicit or implicit sampling.
Our results provide the following answers.  The first way is to limit oneself to mechanisms that satisfy $\edDP$, then the uncertainty resulted from sampling enables one to use a larger privacy budget because of the amplification result in Theorem~\ref{thm:amplification}.  The second way is to use a mechanism that does not satisfy $\edDP$, but satisfies $\bedDPS$, such as safe $k$-anonymization, which we will study in Section~\ref{sec:result}.  However, this way of benefiting from sampling can be enjoyed only once; one cannot use the same dataset to answer other queries, even when using mechanisms that satisfy \eDP.

%Then using an algorithm that satisfies \bedDPS, one can benefit from this fact.

%While this might suggest that mechanisms that satisfy only \bedDPS are not very useful, we

There, however, does exist a more flexible way to use a mechanism that satisfies only \bedDPS.  When one has a large dataset, one can sample a dataset, apply the mechanism, publish the result, and discard the intermediate sampled dataset.  Because of the composability of \edDP, this approach can be applied multiple times so long as each time one performs a fresh sampling.  One can also use multiple mechanisms that satisfy \edDP on a newly sampled dataset.

We point out that the benefit of sampling should not be viewed as just ``throwing away data''; sampling's main benefit is to introduce uncertainty.  Given a dataset, one could sample with, say, $\beta= 0.2$ for many (say, 50) times, and apply a mechanism that satisfies \DPS{(0.2, 0.02, 0)} to each sampled dataset and publish the results.  With high probability, each tuple is included in at least one of the sampled datasets.  That is, in some sense, no tuple is thrown away.  However, as each sampling and publishing satisfies \edDP, and \edDP composes, publishing the 50 outputs still satisfies \edDP for $\epsilon=1,\delta=0$.

In summary, sampling creates uncertainty for the adversary.  While the benefit due to this uncertainty is easy to lose because the uncertainty can be jeopardized by answering any query on it, this uncertainty is also easy to gain, as each sampling introduces fresh uncertainty.

\comment{
Definition~\ref{def:DPS} and Theorem~\ref{thm:sample} provide a way to formally take advantage of the privacy benefits of performing random sampling before eithering publish the dataset through anonymization or answering queries in an interactive setting.

Definition~\ref{def:DPS} also enables the development and usage of data anonymization methods that otherwise cannot be used with high-confidence privacy protection.  Given an algorithm that satisfies \bedDPS (Section~\ref{sec:results} will give some examples), there are at least three different usages.  First to apply these algorithms to randomly sampled datasets.  Second, is to first perform random sampling and then apply the algorithm.  Third, we conjecture that algorithms that satisfy \bedDPS, with suitable parameters (i.e., $\beta \gg \delta$), can be applied safely without random sampling.  Any algorithm that overly depends on any single tuple in its input cannot satisfy \bedDPS.
}
%We have shown that using random sampling as a preprocessing step can amplify the privacy protection power of any algorithm that satisfies \eDP, or equivalently, reduce the potential information leakage.  This can be very useful in the interactive setting as well as non-interactive setting.  When one samples from a very large dataset and uses the sampled dataset to answer statistical queries, such as in the case of using sensus data, one can add much less noise to satisfy the same privacy requirements because of the sampling step.

%be useful in terms of choosing a more appropriate privacy parameters for \edDP thereby amplifying the privacy guarantees provided by an algorithm. Consider the following theorem which relates the privacy parameters in \edDP and $(\beta,\epsilon,\delta)$-DPS:

%This relationship can be useful, not only to provide a better bound on the privacy thereby amplifying the privacy guarantees provided by the algorithm, but also to provide a more theoretical relationship between privacy and utility.

%\begin{openproblem}
%Are there any other algorithm that satisfy $(\beta,\epsilon,\delta)$-DPS?
%\end{openproblem}

\comment{
\begin{definition}
An algorithm $\AA$ gives $(\beta,\epsilon,\delta)$-DPS if and only if the algorithm $\AA^\beta$ gives $(\epsilon,\delta)$-DP, where $\AA^\beta$ denotes the algorithm to first sample with probability $\beta$, and then apply \AA.
\end{definition}

An interesting feature of the $(\beta,\epsilon,\delta)$-DPS notion is that there is a connection between the privacy parameter $\epsilon$ and the sampling rate $\beta$.  The following theorem shows that by employing a smaller sampling rate, one can use achieves a strong privacy protection (i.e., a smaller privacy parameter $\epsilon$).

\begin{theorem}
Any algorithm that satisfies $(\beta_1,\epsilon,\delta)$-DPS also satisfies $(\beta_2,\epsilon',\delta)$-DPS for any $\beta_2 < \beta_1$, where \\ $\epsilon'=\ln\left(1+\left(\frac{\beta_2}{\beta_1}(e^\epsilon-1)\right)\right)$.
\end{theorem}
\begin{proof}
Let $\beta = \frac{\beta_2}{\beta_1}$.  The algorithm $\AA^{\beta_2}$ can be viewed as first sampling with probability $\beta$, then followed by applying the algorithm $\AA^{\beta_1}$.

For any $D,t,S$, we have
{\scriptsize
$$\frac{\Pr[\AA^{\beta_2}(D)=S]}{\Pr[\AA^{\beta_2}(D_{-t})=S]} = \frac{\sum_{T}\Pr[\Lambda_{\beta}(D)=T] \Pr[\AA^{\beta_1}(T)=S]}{\sum_{T}\Pr[\Lambda_{\beta}(D_{-t})=T] \Pr[\AA^{\beta_1}(T)=S]} = \frac{Z}{X}$$
}

To analyze $Z$, we note that all the $T$'s that resulted from sampling from $D$ with probability $\beta$ can be divided into those that do not contain $t$ (i.e., $t$ is not sampled), and those that contain $t$ (i.e., $t$ is sampled).  For a $T$ in the former case, we have
$$
\begin{array}{ll}
\Pr[\Lambda_{\beta}(D)=T] & = (1-\beta) \Pr[\Lambda_{\beta}(D)=T | t\not\in T]
\\
& = (1-\beta)\Pr[\Lambda_{\beta}(D_{-t})=T]
\end{array}
$$
For a $T$ in the latter case, we have
$$
\begin{array}{ll}
 \Pr[\Lambda_{\beta}(D)=T] & = \beta \Pr[\Lambda_{\beta}(D)=T | t\in T]
 \\
 & = \beta \Pr[\Lambda_{\beta}(D_{-t})=T_{-t}]
\end{array}
$$
Hence we have
$$
\begin{array}{rl}
Z = & \sum_{T:t\not\in T}(1-\beta)  \Pr[\Lambda_{\beta}(D)=T | t\not\in T] \Pr[\AA^{\beta_1}(T)=S]
\\
 & + \sum_{T:t\in T}\beta \Pr[\Lambda_{\beta}(D)=T | t\in T]  \Pr[\AA^{\beta_1}(T)=S]
\\
= & (1-\beta) X + \beta \sum_{T'}\Pr[\Lambda_{\beta}(D_{-t})=T'] \Pr[\AA^{\beta_1}(T'_{+t})=S]
\\
= & (1-\beta) X + \beta Y
\end{array}
$$
Next, we bound the following ratio:
{\scriptsize
$$
\frac{Y}{X}  = \frac{\sum_{T'}\Pr[\Lambda_{\beta}(D_{-t})=T'] \Pr[\AA^{\beta_1}(T'_{+t})=S]}
{\sum_{T}\Pr[\Lambda_{\beta}(D_{-t})=T] \Pr[\AA^{\beta_1}(T)=S]}
$$
}

Note that, with probability $1-\delta$
{\scriptsize
$$e^{-\epsilon} \leq \frac{\Pr[\AA^{\beta_1}(T_{+t})=S]}{\Pr[\AA^{\beta_1}(T)=S]} \leq e^{\epsilon}$$
}

Hence, with probability $1-\delta$
{\scriptsize
$$e^{-\epsilon} \leq \frac{Y}{X} \leq e^{\epsilon}$$
}
And, with probability $1-\delta$
$$(1-\beta) +\beta e^{-\epsilon} \leq \frac{\Pr[\AA_\beta(D)=S]}{\Pr[\AA_\beta(D_{-t})=S]} \leq (1-\beta) + \beta e^\epsilon$$
And
$$
1-\beta(1-e^{-\epsilon})) \leq \frac{\Pr[\AA_\beta(D)=S]}{\Pr[\AA_\beta(D')=S]} \leq 1+\beta(e^\epsilon-1)
$$
Hence $\epsilon'=\max(\ln(1+\beta(e^\epsilon-1), -\ln(1-\beta(1-e^{-\epsilon})))= \ln(1+\beta(e^\epsilon-1)$.
\end{proof}

A natural corollary of the above theorem is the following.

\begin{cor}
Given an algorithm $\AA$ that satisfies \edDP, it satisfies $(\beta,\epsilon',\delta)$-DPS for $\epsilon'=\ln(1+\beta(e^\epsilon-1))$.
\end{cor}
\comment{
\begin{proof}
For any $D,t,S$, we have
{\scriptsize
$$\frac{\Pr[\AA_\beta(D)=S]}{\Pr[\AA_\beta(D_{-t})=S]} = \frac{\sum_{T}\Pr[\Lambda_\beta(D)=T] \Pr[\AA(T)=S]}{\sum_{T}\Pr[\Lambda_\beta(D_{-t})=T] \Pr[\AA(T)=S]} = \frac{Z}{X}$$
}

%Let $$X=\sum_{T}\Pr[\Lambda_\beta(D_{-t})=T] \Pr[\AA(T)=S]$$
%and$$Z= \sum_{T}\Pr[\Lambda_\beta(D)=T] \Pr[\AA(T)=S]$$
\comment{
$$
\begin{array}{ll}
 & \frac{\Pr[\AA_\beta(D)=S]}{\Pr[\AA_\beta(D_{-t})=S]}
 \\
= & \frac{\sum_{T}\Pr[\Lambda_\beta(D)=T] \Pr[\AA(T)=S]}{\sum_{T}\Pr[\Lambda_\beta(D_{-t})=T] \Pr[\AA(T)=S]}
\\
= & \frac{\sum_{T}\Pr[\Lambda_\beta(D)=T] \Pr[\AA(T)=S]}{X}
\end{array}
$$
}
We have
{\scriptsize
$$
\begin{array}{rl}
Z = & \sum_{t\not\in T}(1-\beta) \Pr[\Lambda_\beta(D)=T | t\not\in T] \Pr[\AA(T)=S]
\\
 & + \sum_{t\in T}\beta \Pr[\Lambda_\beta(D)=T | t\in T]  \Pr[\AA(T)=S]
\\
= & (1-\beta) X + \beta \sum_{t\not\in T}\Pr[\Lambda_\beta(D_{-t})=T] \Pr[\AA(T_{+t})=S]
\\
= & (1-\beta) X + Y
\end{array}
$$
}
Where $t\not\in T$ means that $t$ is not sampled, and $t \in T$ means that $t$ is sampled.
\end{proof}
}
}

\section{Safe $k$-Anonymization Meets Differential Privacy}\label{sec:result}

In this section we show that $k$-anonymization, when performed in a ``safe'' way, satisfies \bedDPS.  That is, safe $k$-anonymization, when preceded by a random sampling step, satisfies $(\epsilon,\delta)$-differential privacy.

\subsection{An Analysis of $k$-Anonymity}~\label{sec:kanon}
%!TEX root = ./Privacy.tex

%We now examine $k$-anonymity and explore its limitations and potential merits as a principle for privacy protection.

%We first examine the reasons as behind the failure of current implementations $k$-anonymity and why they fail to capture the ``hiding in a crowd principle''. We show that a major weakness in $k$-anonymity is that it is purely syntactic; privacy is argued for by virtue of the anonymization output rather than how it relates to the input.

%Once this and more weakness are identified, an important question comes to mind: if such problems are fixed, can $k$-anonymity (or, more precisely, hiding in a crowd of $k$) offer good-enough privacy protection? We argue that the answer is yes and prove that a simple algorithm employing sampling and $k$-anonymity can satisfy $(\epsilon, \delta)$-differential privacy with a reasonable bound on $\delta$.

The development of $k$-anonymity was motivated by a well publicized privacy incident~\cite{Swe02a}. The
Group Insurance Commission (GIC) published a supposedly anonymized dataset recording the medical visits of patients managed under its insurance plan.  While the obvious personal identifiers (such as name and address) were removed, the published data included zip code, date of birth, and gender, which are sufficient to uniquely identify a significant fraction of the population.  Sweeney~\cite{Swe02a} showed that by correlating this data with the publicly available Voter Registration List for Cambridge Massachusetts, medical visits for many individuals can be easily identified, including those of William Weld, a former governor of Massachusetts.  We note that even without access to the public voter registration list, the same privacy breaches can occur.  Many individuals' birthdate, gender and zip code are public information.
This is especially the case with the advent of social media, including Facebook, where users share seemingly innocuous personal information to the public.  The GIC re-identification attack directly motivated the development of the $k$-anonymity privacy notion.

\begin{definition} \textbf{[$k$-Anonymity, the privacy notion]~\cite{Swe02a}:} \label{def:k-anonymity}
A published table satisfies $k$-anonymity relative to a set of QID attributes if and only if when the table is projected to include only the QIDs, every tuple appears at least $k$ times.
\end{definition}

\mypara{Quasi-identifiers vs. Sensitive Attributes?}
A first problem with Definition~\ref{def:k-anonymity} is that it requires the division of all attributes into quasi-identifiers (QIDs) and sensitive attributes (SA), where the adversary is assumed to know the QIDs, but not SAs.  This separation, however, is very hard to obtain in practice.  Even though only some attributes are used in the GIC incident, it is difficult to assume that they are the only QIDs.  Other attributes in the GIC data include visit date, diagnosis, etc.  There may well exist an adversary who knows this information about some individuals, and if with this knowledge these individuals' record can be re-identified, it is still a serious privacy breach.

The same difficulty is true for publishing any kind of census, medical, or transactional data.  When publishing anonymized microdata, one has to defend against all kinds of adversaries, some know one set of attributes, and others know different sets.  An attribute about one individual may be known by some adversaries, and unknown (and should be considered sensitive) for other adversaries.
\comment{
It seems that the only reasonable assumption one can make is the following:
\begin{quote}\textbf{[Complete Individual Knowledge Assumption]}
For every individual, there may exist an adversary who knows \emph{all attributes} of that individual.
\end{quote}
}

Any separation between QIDs and SAs is essentially making assumptions about the adversary's background knowledge that can be easily violated, rendering any privacy protection invalid.  Hence we consider a strengthened version of $k$-anonymity by treating all attributes as QIDs.  This is stronger than using any subset of attributes as QIDs.  This strengthened version of $k$-anonymity avoids making assumption about the adversary's background knowledge about which attributes are known and what are not.  This has been used in the context of anonymizing transaction data \cite{HN09}.

\mypara{Weakness of the $k$-Anonymity Notion.}  With the strengthened version of $k$-anonymity, one might expect that it should stop re-identification attacks.  To satisfy this notion, each tuple in the output is blended in a group of at least $k$ tuples that are the same.  This follows the appealing principle that ``privacy means hiding in a crowd''.  The intuition is that as there are at least $k-1$ other tuples that look exactly the same, one cannot re-identify which tuple in the output corresponds to an individual with probability over $1/k$.  Unfortunately, this intuition turns out to be wrong.  Only making the syntactic requirement that each tuple appears at least $k$ times does not protect privacy, as a trivial way to satisfy this is to select some tuples from the input and then duplicate each of them $k$ times.

%\begin{alg}\textbf{[$k$-Duplication]:}
%First, select $1/k$'th portion of the input tuples by some method.  For example, one method is to first group the tuples into clusters of $k$ tuples each, and then randomly selects one tuple in each cluster as the representative for that cluster.  Then, duplicate each selected tuple $k$ times.
%\end{alg}

%Obviously the outputs of any $k$-Duplication algorithm satisfies $k$-anonymity.  However, it is very difficult to claim that it provides privacy protection, as it essentially publishes a portion of the input dataset unchanged.  One can point to a group of $k$ identical tuples and say which individual these tuples are from.

Several other privacy notions have been introduced on the motivation that $k$-anonymity is not strong enough.  Among these are  $\ell$-diversity~\cite{MGK+06} and $t$-closeness~\cite{LLV07}.  In these approaches, it is observed that even if $k$-anonymity is achieved, information about sensitive attributes can still be learned, perhaps due to the uneven distribution of their values.  This line of work, however, still requires the problematic assumption that there is a separation between QIDs and SAs, and that the adversary knows only the QIDs.  In other words, while they correctly assert that $k$-anonymity is not strong enough, these definitions did not fix it in the right way.

%\mypara{On Other Syntactic Privacy Notions.}

\mypara{$k$-Anonymity vs.~$k$-Anonymization Algorithms.}
Here we would like to make a clear distinction between the \emph{$k$-anonymity}, the privacy notion, and \emph{$k$-anonymization algorithms}.
%The $k$-anonymity privacy notion is a \emph{syntactic} privacy, and is weak in the privacy protection if offers.

Many $k$-anonymization algorithms have been developed in the literature.  Given input datasets, they aim at producing anonymized versions of the input datasets that satisfy $k$-anonymity.  That the $k$-anonymity privacy notion is weak means that producing outputs that \emph{satisfying $k$-anonymity alone} is insufficient for privacy protection.  However, this does not automatically mean that all $k$-anonymization have privacy vulnerabilities.  We now show that the algorithms that have been developed in the literature are indeed vulnerable to re-identification attacks.  Consider the following anonymization scheme, which represents several proposed algorithms for $k$-anonymity~\cite{BKBL07,LDR06}.

\begin{alg}\textbf{[Clustering and Local Recoding (CLR)]:}
First, group input tuples into clusters such that each cluster has at least $k$ tuples.  For example, one method of grouping is the Mondrian algorithm~\cite{LDR06}.  One could also use some clustering method based on some distance measurement (e.g., \cite{BKBL07}).  Then, for each tuple, replace each attribute value with a generalized value that represents all values for that attribute in the cluster.
\end{alg}

CLR algorithms are vulnerable when some tuples contain extreme values.  Even if the output satisfies $k$-anonymity, the generalized value depends on the extreme values of some tuples; hence from the output an adversary can infer that one's tuple is in the dataset and can thus infer these values.  For example, suppose the dataset records the net worth of some individuals in a town.  Further suppose that it is known that only one individual in the town has net worth over \$10 million.  When given a $(k=20)$-anonymized output dataset containing one group of tuples that all have $[900K,35M]$ as the generalized net worth value, what can one conclude?  At least the following: the rich individual is in the dataset; the individual's tuple is in the group; and the individual's net worth is \$35 million.  It would be difficult to say that because in the output dataset, there are at least $19$ other tuples that are exactly the same, then the individual cannot be re-identified with probability $1/20$.
%The same attack applies to almost every existing $k$-anonymization algorithm.

%However, in reality most existing $k$-anonymization algorithms are vulnerable to re-identification problems.  One class of examples are the existence of extreme values.

Similar weaknesses exist for other $k$-anonymization algorithm in the literature, for example, those computing a generalization scheme based on the input dataset~\cite{HN09}.  With all these algorithms, the presence and non-presence of some extreme values will affect the resulted generalization scheme, leaking information.

As these algorithms are sensitive to the presence of a single tuple with extreme values, they do not satisfy \bedDPS when $\beta > \delta$, since sampling with $\beta$ will result the presence of the tuple selected with probability $\beta$.
%This suggest that \bedDPS can rule out ``bad'' anonymization algorithms.

%!TEX root = ./Privacy.tex

\subsection{Towards ``Safe'' $k$-Anonymization}

We have shown that $k$-anonymity (even when all attributes are treated as QID) does not provide adequate protection, nor do existing $k$-anonymization algorithms.  One natural question is: Is this because the intuition ``hidden in a crowd'' fails to provide privacy protection, or is it because the definition of $k$-anonymity fails to correctly capture ``hidden in a crowd''?

We believe that the answer is the latter.  The notion of $k$-anonymity implicitly assumes that there is a one-to-one relation $g$ between the input tuples and the output tuples, i.e., given input $D$, the output dataset is $\{g(t) \mid t\in D\}$.  When there are $k$ output tuples that are the same, there must exist $k$ input tuples that are indistinguishable based only on their corresponding outputs.  However, this relation $g$ itself can be overly dependent on one or a few input tuples.  For instance, consider the example above with the extreme value.  Choosing $[900K,35M]$ as the generalized value depends on the single input tuple with value $35M$; hence all tuples that contain this generalized value are directly affected by one tuple's presence, and the tuple is not really ``hiding in a crowd''.

%Hence it is not just $g(t_1)$ that contains information about $t_1$, but so does $g(t_2)$, $g(t_3)$, etc.  As a result, satisfying only $k$-anonymity in the output does not prevent re-identification.
%In what follows, we show that this is a major weakness of $k$-anonymity.

An intriguing question is: If a $k$-anonymization algorithm uses a mapping that does not overly depend on any individual tuple, does such an algorithm provide an adequate level of privacy protection?  To answer this question, we first formalize such algorithms as safe $k$-Anonymization algorithms.

Intuitively, an $k$-anonymization algorithm $\AA$ takes as input a dataset $D$ and a value $k$ and produces an output dataset $S=\AA(D)$.  In order to define ``safe'' anonymization algorithms, we require each anonymization algorithm $\AA$ to be specified in two steps.  The first step, $\AA_m$, outputs a mapping function $g:D\rightarrow T$, where $T$ is the set of all possible tuples. The second step applies $g$ to all tuples in $D$.  That is, $\AA(D, k)=\mathsf{Apply}(\AA_m(D, k), D, k)$, where $\mathsf{Apply}$ is defined as follows.

%is to be described using an algorithm $\AA_m$, which takes the input dataset $D$ and $k$, and outputs a tuple-to-tuple mapping $g:D\rightarrow T$, where $T$ is the set of all possible tuples.  Thus, the algorithm $\AA(D, k)$ is defined as

\begin{algorithm}
	$\mathsf{Apply}(g, D, k)$
    \begin{algorithmic}
        \STATE $S \leftarrow \emptyset$
        \FORALL{$t \in D$}
            \STATE $S \leftarrow S \cup g(t)$
        \ENDFOR
        \FORALL{$s \in S$}
            \IF{$s\mbox{ appears less than $k$ times in }S$}
                \STATE remove all occurrences of $s$ from $S$
            \ENDIF
        \ENDFOR
		\RETURN $S$
	\end{algorithmic}
\end{algorithm}

%The output $S$ is then given by applying $g$ to each tuple in $D$, collect all output tuples, and then removing any tuple that does not appear at least $k$ times.
%$$S =\AA(D) =  ||\{ g(t) \mid t \in D\}||_k ; g =\AA_m(D)$$

We note that all existing $k$-anonymization algorithms can be modeled this way, as there is no limitation on the the form of $\AA_m$'s output $g$.  In the extreme case, $g$ can be described as a table matching each tuple in $D$ to the desired output tuple.

\begin{definition}[Strongly-Safe Anonymization]
We say that a $k$-anonymization algorithm \AA is strongly safe if and only if the function $\AA_m(D,k)$ is remains constant when $D$ changes, i.e., the mapping $g$ does not depend on its input dataset.
\end{definition}

An example of a strongly-safe $k$-anonymization algorithm is to always use the same global recoding scheme no matter what dataset is the input.

%\begin{alg}\textbf{[\mbox{Data-independent}\_Generalization $+$ $k$-Suppression ($k$-DGS)]:}
%This algorithm uses a global recoding scheme (i.e., a generalization scheme) that does not depend on the particular input dataset $D$.  It has two steps.  In the first step, one applies this recoding scheme to each tuple in the input.  In the second step, one suppresses any tuple that appear less than $k$ times.
%\end{alg}

Intuitively a strongly-safe $k$-anonymization algorithm provides some level of privacy protection, and the level of privacy protection increases with larger values of $k$.  If any individual's tuple is published, there must exist at least $k-1$ other tuples in the \emph{input} database that are the same under the recoding scheme; furthermore, the recoding scheme does not depend on the dataset, and one sees only the results of the recoding.  Hence in this input dataset, the individual is hidden in a crowd of at least $k$.  However, the following proposition shows that strongly safe $k$-anonymization algorithms do not satisfy \edDP.

\begin{prop}
No strongly-safe $k$-anonymization algorithm satisfies \edDP for any $\delta<1$.
\end{prop}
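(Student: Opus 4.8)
The plan is to exploit the fact that a strongly-safe $k$-anonymization algorithm is \emph{deterministic}, together with the suppression step in $\mathsf{Apply}$, to exhibit two neighboring datasets whose outputs are disjoint point masses. More abstractly: any deterministic algorithm that satisfies \edDP with $\delta<1$ must be constant on every pair of neighbors (hence globally constant), and a strongly-safe $k$-anonymization algorithm is never constant; but it is cleaner to just produce the offending pair directly.

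First I would note that if \AA is strongly-safe, then $\AA_m(D,k)$ returns a fixed mapping $g$ independent of $D$, and $\mathsf{Apply}(g,\cdot,k)$ is a deterministic procedure; hence $\AA(\cdot,k)$ is a deterministic function of its input dataset, so $\Pr[\AA(D,k)\in O]\in\{0,1\}$ for every $D$ and every $O\subseteq\mathsf{Range}(\AA)$.

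Next, fix any tuple $t$ in the tuple domain and let $v=g(t)$. Take $D$ to consist of exactly $k$ copies of $t$, and $D'=D\setminus\{t\}$ to consist of $k-1$ copies of $t$; these are neighboring datasets under the add/remove-a-tuple notion. Tracing $\mathsf{Apply}$: on input $D$ the first loop produces the multiset $s$ consisting of $k$ copies of $v$, and since $v$ occurs $k\ge k$ times it survives the suppression loop, so $\AA(D,k)=s$. On input $D'$ the first loop produces $k-1$ copies of $v$, which is fewer than $k$ occurrences, so all of them are removed and $\AA(D',k)=\emptyset\neq s$.

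Finally, instantiate the definition of \edDP with $O=\{s\}$: we obtain $1=\Pr[\AA(D,k)\in O]\le e^{\epsilon}\Pr[\AA(D',k)\in O]+\delta=e^{\epsilon}\cdot 0+\delta=\delta$, which forces $\delta\ge 1$ and contradicts $\delta<1$. The construction uses no structure of $g$ beyond its being fixed, so it applies to every strongly-safe algorithm. There is essentially no hard step; the only points needing care are confirming that $D$ and $D'$ are genuinely neighbors and verifying the behavior of the suppression loop on these two specific inputs (in particular that $\AA(D,k)$ is a nonempty multiset that cannot also arise as $\AA(D',k)$).
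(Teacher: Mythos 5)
Your proof is correct and follows essentially the same route as the paper's: since a strongly-safe algorithm is deterministic, it suffices to exhibit neighboring datasets with distinct outputs, forcing $\delta \geq 1$. The only (immaterial) difference is the witness pair: the paper takes $n>k$ tuples mapping to $g(t)$ so both outputs survive suppression but differ in multiplicity, whereas you take exactly $k$ versus $k-1$ copies so that one output is suppressed entirely.
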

\begin{proof}
Given a strongly-safe algorithm $\AA$, let $g$ be the mapping \AA uses.  Choose $D$ and $D'$ that differ in one tuple $t$ and $D$ contains $n>k$ tuples $t'$ such that $g(t')=g(t)$. The dataset $D'$ contains $n-1$ copies of such $t'$.  Then, $\AA(D)$ and $\AA(D')$ contain different numbers of $g(t)$.  Let $S=\AA(D)$, we have $\Pr[\AA(D)=S]=1$ and $\Pr[\AA(D')=S]=0$.
\end{proof}

%If a secure localized algorithm satisfies $k$-anonymity, then it intuitively provides some level of re-identification protection, parameterized by $k$.

\comment{
We note that an adversary is able to tell whether an output $S$ is resulted from $D$ or $D'$ because the adversary knows exactly how many copies of $t'$ are in $D$ such that $g(t')=g(t)$.  A natural consideration is: ``Can we make the adversary's knowledge about this less certain?''  Observing an output that contains, say, $25$ copies of tuples $g(t)$, an adversary would be unable to tell whether the input is $D$ or $D'$ if he is uncertain whether $D$ contains 25 or 26 copies of $t'$ such that $g(t')=g(t)$.
}

\subsection{Privacy of Strongly-Safe $k$-Anonymization}

%\begin{alg}\textbf{[$\beta$-Sampling + \mbox{Data-independent} \mbox{\_Generalization} + $k$-Suppression $(k,\beta)$-SDGS]} The algorithm has three steps.  In the first step, it samples from the input dataset with probability $\beta$, that is, each tuple in the input dataset is chosen with probability $\beta$.  In the second step, it applies a data-independent generalization procedure to each tuple.  In the third and final step, it suppresses any tuple that appears less than $k$ times.
%\end{alg}

We now show that strongly-safe $k$-anonymization algorithm satisfies $(\beta, \epsilon, \delta)$-differential privacy for a small $\delta$ with reasonable values of $k$ and $\beta$.  We use $f(j;n,\beta)$ to denote the probability mass function for the binomial distribution; that is, $f(j;n,\beta)$ gives the probability of getting exactly $j$ successes in $n$ trials where each trial succeeds with probability $\beta$. And we use $F(j;n,\beta)$ to denote the cumulative probability mass function; that is, $F(j;n,\beta)=\sum_{i=0}^{j} f(i;n,\beta)$.

%In the proofs in, we occasionally use

\comment{
\begin{definition}[$(\epsilon,\delta)$-differential privacy (\edDP)]\label{def:pwindist}
A randomized algorithm $\AA$ satisfies $(\epsilon,\delta)$-differential privacy, if for
any dataset $D$ and any $t \in D$, the
following holds with probability at least $1-\delta$:
$$
e^{-\epsilon}   \leq \frac{\Pr[\AA(D)=S]}{\Pr[\AA(D_{-t})=S]} \leq e^{\epsilon}
$$
More precisely, let $O$ denote the set of all $S$'s in $\mathit{Range}(\AA)$ such that the above does not hold.  Then $\Pr[\AA(D)\in O] < \delta$ and $\Pr[\AA(D_{-t})\in O] < \delta$.
\end{definition}

Some papers use another slightly weaker version of relaxation, which is implied by Definition~\ref{def:pwindist}. See, for example,~\cite{KS08} for more details.
}

\begin{theorem}\label{thm:main}
Any strongly-safe $k$-anonymization algorithm satisfies \bedDPS for any $0<\beta<1$, $\epsilon\geq - \ln(1-\beta)$, and $\delta = d(k,\beta,\epsilon)$, where the function $d$ is defined as
\begin{eqnarray*}
d(k,\beta,\epsilon) &= &
\max_{n: n \geq \left\lceil \frac{k}{\gamma} -1 \right\rceil }\sum_{j>\gamma n}^{n} f(j; n, \beta),
\\
%&\leq & e^{-k\left(\ln\left(\frac{\gamma}{\beta}\right) - \frac{(\gamma -\beta)}{\gamma}\right)}
\end{eqnarray*}
where $\gamma = \frac{(e^\epsilon-1+\beta)}{e^\epsilon}$.
\end{theorem}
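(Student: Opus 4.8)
The plan is to reduce $(\epsilon,\delta)$-differential privacy of $\AA^\beta$ to a one-dimensional statement about a single generalization bucket. Fix a strongly-safe $\AA$ with its dataset-independent mapping $g$, and let $D,D'$ be neighbors, say $D' = D\cup\{t\}$. Write $v = g(t)$, $B = g^{-1}(v)$, and $a = |D\cap B|$ (counted with multiplicity), so $|D'\cap B| = a+1$. First I would observe that, because $g$ is fixed and the sampling step includes each tuple independently with probability $\beta$, the sampled dataset $\tilde D$ satisfies $|\tilde D\cap g^{-1}(w)| \sim \mathrm{Bin}(|D\cap g^{-1}(w)|,\beta)$ independently across values $w$, and the multiplicity of $w$ in $\AA(\tilde D)$ is this count if it is at least $k$ and ``suppressed'' otherwise. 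Hence the output of $\AA^\beta(D)$ is a product of independent coordinates, one per value, and every coordinate except the one for $v$ has the \emph{same} distribution under $D$ and $D'$. A routine conditioning argument (the worst measurable event is the union of the ``bad'' atoms) then shows it suffices to prove, for the single coordinate $Y=$ multiplicity of $v$, that $\Pr[Y_D\in C]\le e^\epsilon\Pr[Y_{D'}\in C]+\delta$ together with its reverse for every $C$; equivalently, in each direction, the total probability mass (on the larger side) of the counts $y$ at which the likelihood ratio exceeds $e^\epsilon$ must be at most $\delta$. Here $Y_D = h(\mathrm{Bin}(a,\beta))$ and $Y_{D'} = h(\mathrm{Bin}(a+1,\beta))$, where $h(u)=u$ for $u\ge k$ and $h(u)=\bot$ (suppressed) otherwise.

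Next I would compute these likelihood ratios explicitly using the identity $\binom{m}{y}/\binom{m-1}{y} = m/(m-y)$. Comparing $D$ to $D'$: for $y\ge k$ the ratio $f(y;a,\beta)/f(y;a+1,\beta) = (a+1-y)/((a+1)(1-\beta))$ is at most $1/(1-\beta)$, and for the suppressed outcome the ratio $F(k-1;a,\beta)/F(k-1;a+1,\beta)$ is at most $1/(1-\beta)$ too, using $F(k-1;a+1,\beta)\ge(1-\beta)F(k-1;a,\beta)$. Since the hypothesis $\epsilon\ge-\ln(1-\beta)$ gives $e^\epsilon\ge 1/(1-\beta)$, this direction costs no error term at all. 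Comparing $D'$ to $D$: the suppressed outcome is harmless because $\mathrm{Bin}(a+1,\beta)$ stochastically dominates $\mathrm{Bin}(a,\beta)$; and for $y\ge k$ the ratio $f(y;a+1,\beta)/f(y;a,\beta) = (a+1)(1-\beta)/(a+1-y)$ exceeds $e^\epsilon$ exactly when $y>(a+1)\gamma$, where $\gamma = 1-(1-\beta)/e^\epsilon = (e^\epsilon-1+\beta)/e^\epsilon$. So the bad counts are $\{y : y\ge k \text{ and } y>(a+1)\gamma\}$, and the required $\delta$ is the supremum over $a$ of $\Pr[\mathrm{Bin}(a+1,\beta)\in\{y : y\ge k \text{ and } y>(a+1)\gamma\}]$.

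It then remains to identify this supremum with $d(k,\beta,\epsilon)$. Setting $m=a+1$ I would split on $m$. If $m<k$ the bad set is empty. If $m\gamma\ge k$ (that is, $m\ge k/\gamma$) the constraint $y\ge k$ is subsumed by $y>m\gamma$, so the mass is $\sum_{j>\gamma m}^{m} f(j;m,\beta)$, which is exactly the $n=m$ term of $d$, and $m$ lies in the index range $n\ge\lceil k/\gamma-1\rceil$. The remaining case $k\le m<k/\gamma$ is the one I expect to be the main obstacle: there the bad set is just $\{y\ge k\}$ and the mass is $\Pr[\mathrm{Bin}(m,\beta)\ge k]$, which does not literally appear in the definition of $d$. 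The key trick is to note that $n^\ast := \lceil k/\gamma-1\rceil = \lceil k/\gamma\rceil-1$ satisfies $\gamma n^\ast\in(k-1,k)$, since $0<\gamma<1$; hence $\lfloor\gamma n^\ast\rfloor = k-1$ and the $n=n^\ast$ term of $d$ equals $\sum_{j\ge k}^{n^\ast} f(j;n^\ast,\beta) = \Pr[\mathrm{Bin}(n^\ast,\beta)\ge k]$. Because every $m$ in this case obeys $m\le n^\ast$, stochastic monotonicity of the binomial in its first parameter gives $\Pr[\mathrm{Bin}(m,\beta)\ge k]\le\Pr[\mathrm{Bin}(n^\ast,\beta)\ge k]\le d(k,\beta,\epsilon)$. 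Combining the three cases bounds the bad mass by $d(k,\beta,\epsilon)$ for every neighboring pair, so $\AA^\beta$ is $(\epsilon, d(k,\beta,\epsilon))$-differentially private, which (whenever $\beta>d(k,\beta,\epsilon)$) is precisely \bedDPS. Apart from this case analysis, the rest is elementary bookkeeping with binomial coefficients and cumulative distribution functions.
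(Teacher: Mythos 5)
Your proposal is correct and follows essentially the same route as the paper's proof: reduce to the multiplicity of $g(t)$ in the output, compute the binomial likelihood ratios $\frac{n(1-\beta)}{n-j}$ (and the CDF ratio for the suppressed case), observe that the lower bound is free because $\epsilon\geq-\ln(1-\beta)$, identify the bad event as $\{j\geq k \wedge j>\gamma n\}$, and show its mass is maximized within the index range $n\geq\lceil k/\gamma-1\rceil$ via monotonicity of the binomial tail in $n$. Your treatment is somewhat more explicit than the paper's about the product-measure reduction across generalization buckets and about the edge cases $k\leq n<k/\gamma$, but the substance is identical.
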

See Appendix~\ref{app:main} for the proof.

\begin{table*}
	\begin{center}
		$\epsilon$ \\ \ \\
		$\beta$
		\begin{tabular}{c||cccccc}
		%\hline
		& 0.25 & 0.5 & 0.75 & 1.0 & 1.5 & 2.0 \\
		 \hline
		0.05 & 6.83$\times 10^{-10}$ & 2.50$\times 10^{-14}$ & 3.19$\times 10^{-17}$ & 1.76$\times 10^{-19}$ & 3.97$\times 10^{-22}$ & 2.00$\times 10^{-24}$ \\
		 \hline
		0.1 & 4.19$\times 10^{-06}$ & 1.61$\times 10^{-09}$ & 3.44$\times 10^{-12}$ & 4.07$\times 10^{-14}$ & 3.22$\times 10^{-16}$& 1.89$\times 10^{-18}$ \\
		 \hline
		0.2 & 2.16$\times 10^{-03}$ & 8.02$\times 10^{-06}$ & 1.89$\times 10^{-07}$ & 6.03$\times 10^{-09}$ & 4.79$\times 10^{-11}$ & 1.59$\times 10^{-12}$ \\
		 \hline
		\end{tabular}
		\caption{A table showing the relationship between $\beta$ and $\epsilon$ in determining the value of $\delta$ when $k$ is fixed. In the above $k=20$, and each cell in the table reports the value of $\delta$ under the given values of $\beta$ and $\epsilon$}\label{tab:betaeps}
	\end{center}
\end{table*}

The function $d$ relates the four parameters $\epsilon,\beta,k,\delta$ by requiring $\delta=d(k,\beta,\epsilon)$.  Note that the other requirement is that $\epsilon\geq - \ln(1-\beta)$.  Among the four parameters, $\epsilon$ and $\delta$ define the level of privacy protection, while $k$ and $\beta$ affect the quality of anonymized data.  We now examine the relationships among these four parameters.

% The $e^{-k\left(\ln\left(\frac{\gamma}{\beta}\right) - \frac{(\gamma -\beta)}{\gamma}\right)}$ bound is proved using the Chernoff bounds; it is a tight bound.  Here we directly compute $d(k,\beta,\epsilon)=\max_{n: n \geq \left\lceil \frac{k}{\gamma} -1 \right\rceil }\sum_{j>\gamma n}^{n} f(j; n, \beta)$.

To compute this, we want to find $n \geq \left\lceil \frac{k}{\gamma} -1 \right\rceil $ that maximizes $\sum_{j>\gamma n}^{n} f(j; n, \beta)$.  We first observe that $\gamma > \beta$ because
$$
\begin{array}{ll}
\gamma - \beta = \frac{(e^\epsilon-1+\beta)}{e^\epsilon} - \beta = \frac{(e^\epsilon-1)(1-\beta)}{e^{\epsilon}} > 0
\end{array}
$$

That is, $\sum_{j>\gamma n}^{n} f(j; n, \beta)$ sums up the tail binomial distribution probabilities for the portion of the tail beyond $\gamma n$, as shown in Figure~\ref{fig:binom}.  Following the intuition behind the law of large numbers, the larger the value of $n$, the smaller this tail probability.  %Note that the Chernoff bound of this tail probability decreases exponentially with respect to $n$.
Hence intuitively, choosing the smallest value of $n$, i.e., $n=n_m=\left\lceil \frac{k}{\gamma} -1 \right\rceil$ should maximize the formula.
Unfortunately, due to the discrete nature of the binomial distribution, the maximum value may not be reached at $n_m$, but instead at one of the next few local maximal points $\left\lceil \frac{k+1}{\gamma} -1 \right\rceil$, $\left\lceil \frac{k+2}{\gamma} -1 \right\rceil$, $\cdots$.
Thus we are unable to further simplify the representation of the function $d(k,\beta,\epsilon)$.

We now report the relationships among $\epsilon,\beta,k,\delta$ using numerical computation.
In Table~\ref{tab:betaeps}, we fix $k=20$ and report the values of $\delta$ under different $\epsilon$ and $\beta$ values. The table shows that the values of $\delta$ can be very small. We note that with fixed $k$ and $\beta$, $\delta$ decreases as $\epsilon$ increases, which states that the error probability gets smaller when one relaxes the $\epsilon$-bound on the probability ratio.  In other words, the more serious a privacy breach, the more unlikely it occurs.  The table also shows that with fixed $k$ and $\epsilon$, $\delta$ decreases as $\beta$ decreases, meaning that a smaller sampling probability improves the privacy protection.
%In Section~\ref{sec:DPS}, we will see another instance of the effect of smaller sampling probability means better privacy protection.
%One thing to note here is that for small values of $\epsilon$ and larger values of $\beta$, $\delta$ might be considerably large. For instance, under $\beta=0.2, \epsilon=0.25$, $\delta \approx 10^{-3}$. This increases the risk of disclosure to an uncomfortable limit. We propose a method to deal with this in section~\ref{sec:smooth}.

%In Table~\ref{tab:betaeps}, we first how the values of $\delta$, with fixed value of $k=20$.
%We plot the above function for $\delta$, under different values of $k$, $\epsilon$ and $\beta$.
%In order to find the maximum value of $\delta$ for any possible dataset, we set $n = \left\lceil \frac{(k-1)e^\epsilon}{e^\epsilon-1+\beta} \right\rceil$.

In Figure~\ref{fig:eps_delta1}, we show the results from examining the relationship between $\epsilon$ and $\delta$ when we vary $k\in \{5,10,20,30,50\}$ under fixed $\beta=0.2$.  We plot $\frac{1}{\delta}$ against $\epsilon$ for values of $\epsilon > -\ln(1-\beta)$. The figure indicates a negative correlation between $\epsilon$ and $\delta$. Furthermore, increasing $k$ has a close to exponential effect of improving privacy protection.  For example, when $\epsilon=2$, increasing $k$ by $10$ roughly decreases $\delta$ by $10^{-5}$.

In Figure~\ref{fig:eps_delta2}, we show the results from examining the effect of varying $\beta \in \{0.05,0.1,0.2,0.3,0.4\}$ under a fixed value of $k=20$.  This shows that decreasing $\beta$ also dramatically improve the privacy protection.  The two figures indicate the intricate relationship between privacy and utility.

In Figure~\ref{fig:eps_delta3}, we explore this phenomenon that increasing $k$ and decreasing $\beta$ both improve privacy protection.  Starting from $(k=15,\beta=0.05)$, each time we double $\beta$ and find a value $k$ that gives a similar level of privacy protection.  We finds that $k$ increases from $15$ to $22$ (for $\beta=0.1$), $35$ (for $\beta=0.2$), and $60$ (for $\beta=0.4$).

In Figure~\ref{fig:eps_delta4}, we examine the quality of privacy protection for very small $k$'s (from $1$ to $5$).  We choose a very small sampling probability of $\beta=0.025$.  Not surprisingly, when $k=1$, the privacy protection is entirely from the sampling effect, as the obtained $\delta$ value is less than $\beta$.  However, when $k\geq 2$, we start seeing privacy protection effect from $k$-anonymization, with $\delta$  ($< 0.001$)  significantly smaller than $\beta=0.025$ when $\epsilon=2$.

Finally, in Figure~\ref{fig:k_eps} we show the relationship between the privacy parameter $\epsilon$ and the utility parameter $k$ if we set the requirement that $\delta \leq 10^{-6}$. The figure shows that smaller values of $\epsilon$ can be satisfied for larger values of k. Furthermore, the effect of $\beta$ over $\epsilon$ is quite substantial.

\begin{figure}
	\includegraphics[width=3.5in, height=2in]{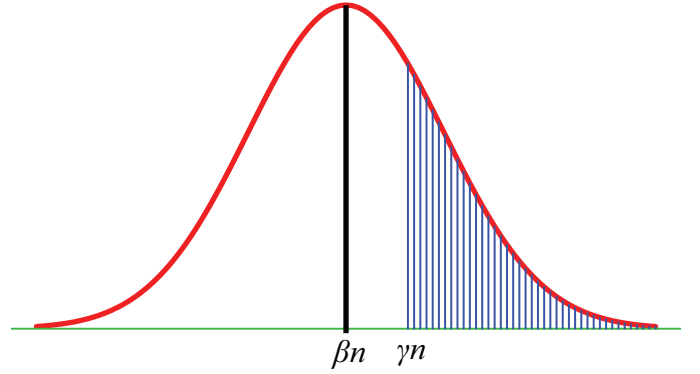}
	\caption{A graph showing the relationship between $\beta n$ and $\gamma n$ on a binomial curve}\label{fig:binom}
\end{figure}

\begin{figure}
	\psfrag{DELTA}{$\frac{1}{\delta}$}
	\psfrag{EEPS}{$\epsilon$}
	\psfrag{KK = 50, BB = 0.2}{\tiny $k = 50, \beta = 0.2$}
	\psfrag{KK = 30, BB = 0.2}{\tiny $k = 30, \beta = 0.2$}
	\psfrag{KK = 20, BB = 0.2}{\tiny $k = 20, \beta = 0.2$}
	\psfrag{KK = 10, BB = 0.2}{\tiny $k = 10, \beta = 0.2$}
	\psfrag{KK = 5, BB = 0.2}{\tiny $k = 5, \beta = 0.2$}
	\includegraphics[width=3.5in]{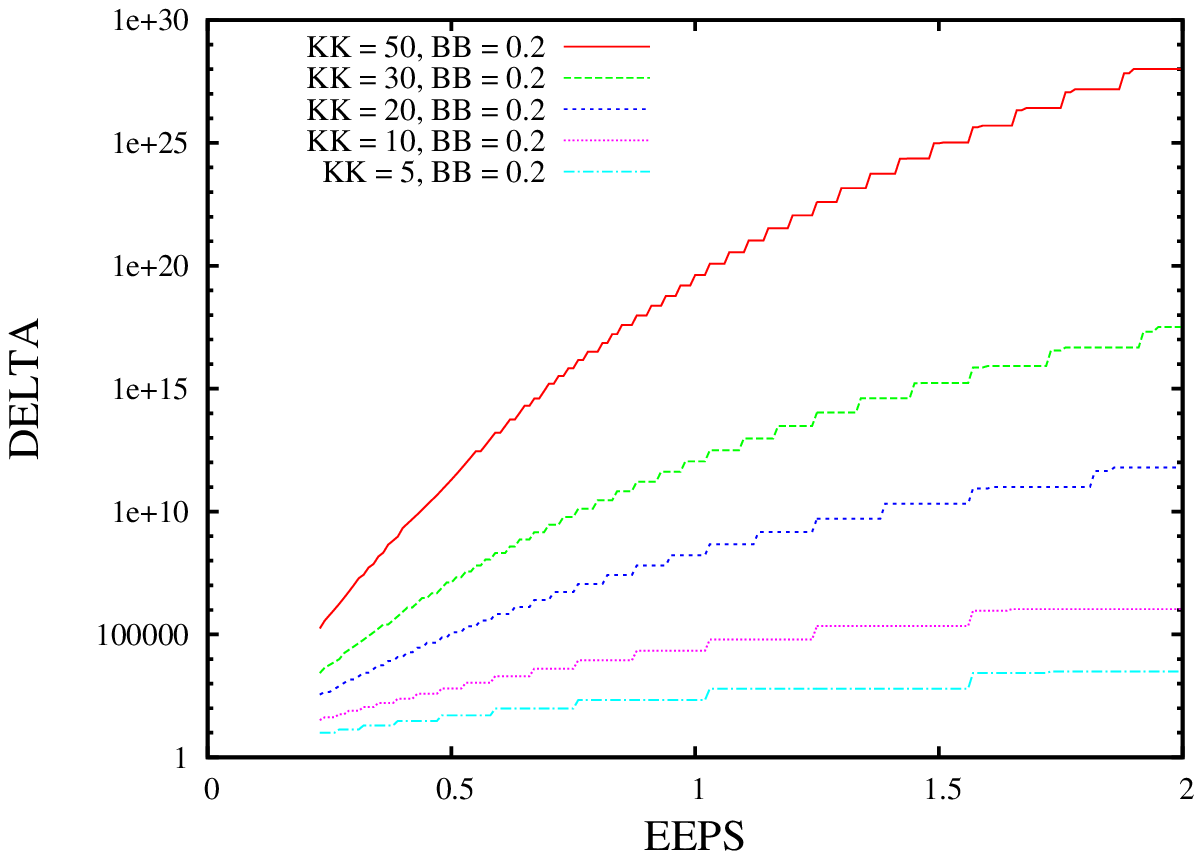}
	\caption{A graph showing the relationship between $\epsilon$ and $\frac{1}{\delta}$ if we vary the values of $k$ under fixed $\beta$}\label{fig:eps_delta1}
\end{figure}

\begin{figure}
	\psfrag{DELTA}{$\frac{1}{\delta}$}
	\psfrag{EEPS}{$\epsilon$}
	\psfrag{KK = 20, BB = 0.05}{\tiny $k = 20, \beta = 0.05$}
	\psfrag{KK = 20, BB = 0.1}{\tiny $k = 20, \beta = 0.1$}
	\psfrag{KK = 20, BB = 0.2}{\tiny $k = 20, \beta = 0.2$}
	\psfrag{KK = 20, BB = 0.3}{\tiny $k = 20, \beta = 0.3$}
	\psfrag{KK = 20, BB = 0.4}{\tiny $k = 20, \beta = 0.4$}
	\includegraphics[width=3.5in]{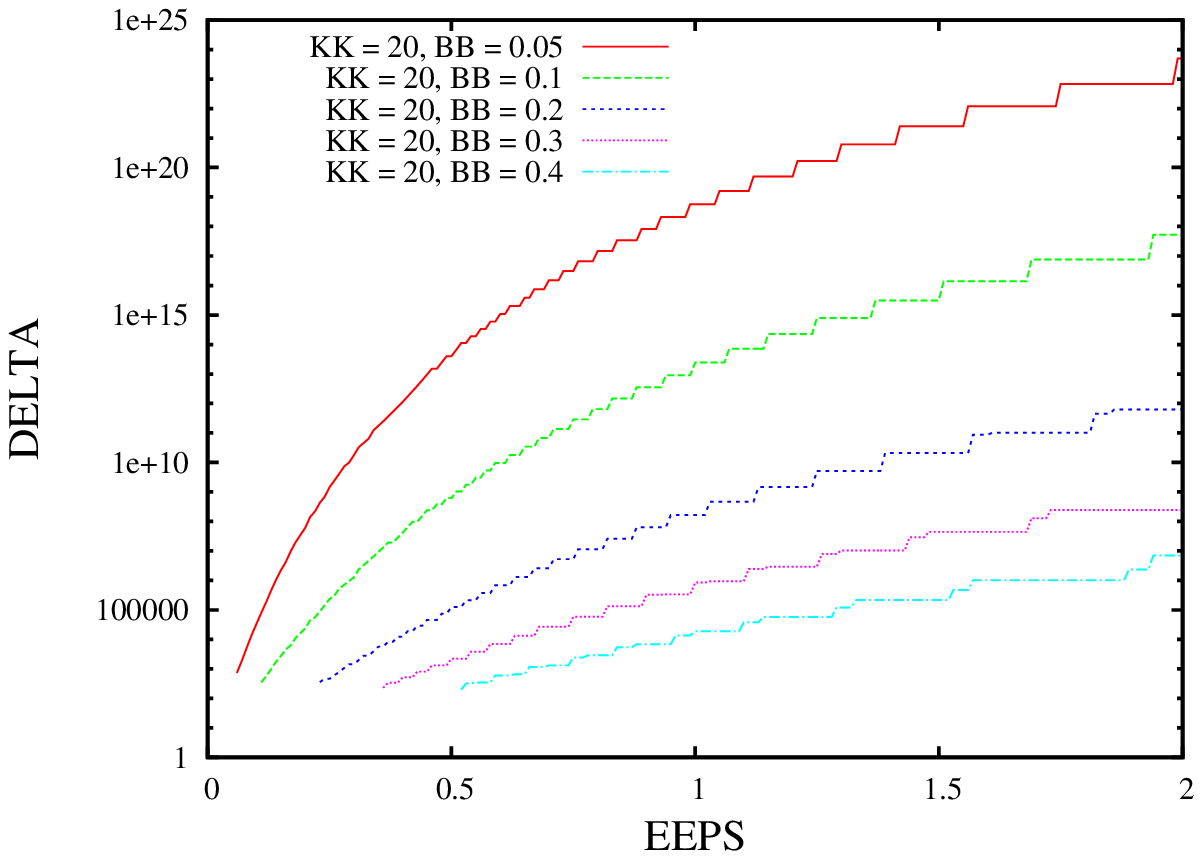}
	\caption{A graph showing the relationship between $\epsilon$ and $\frac{1}{\delta}$ if we vary the values of $\beta$ under fixed $k$.}\label{fig:eps_delta2}
\end{figure}

\begin{figure}
	\psfrag{DELTA}{$\frac{1}{\delta}$}
	\psfrag{EEPS}{$\epsilon$}
	\psfrag{KK = 15, BB = 0.05}{\tiny $k = 15, \beta = 0.05$}
	\psfrag{KK = 22, BB = 0.1}{\tiny $k = 22, \beta = 0.1$}
	\psfrag{KK = 35, BB = 0.2}{\tiny $k = 35, \beta = 0.2$}
	\psfrag{KK = 60, BB = 0.4}{\tiny $k = 60, \beta = 0.4$}
	\includegraphics[width=3.5in]{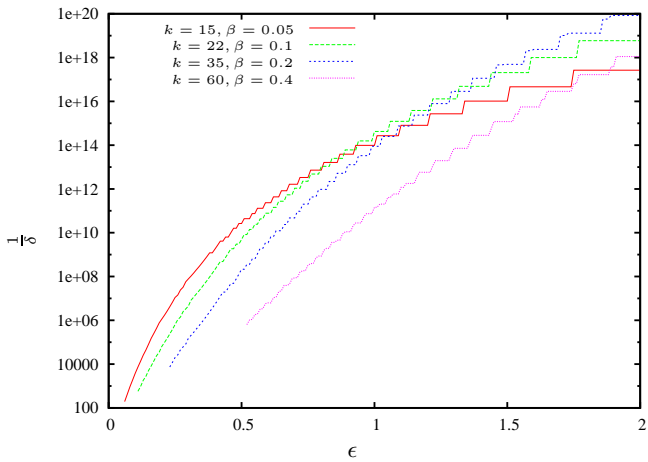}
	\caption{A graph showing the relationship between the values of $k$ needed to achieve roughly the same $\delta$ if we double $\beta$.}\label{fig:eps_delta3}
\end{figure}

\begin{figure}
	\psfrag{DELTA}{$\frac{1}{\delta}$}
	\psfrag{EEPS}{$\epsilon$}
	\psfrag{KK = 1, BB = 0.025}{\tiny $k = 1, \beta = 0.025$}
	\psfrag{KK = 2, BB = 0.025}{\tiny $k = 2, \beta = 0.025$}
	\psfrag{KK = 3, BB = 0.025}{\tiny $k = 3, \beta = 0.025$}
	\psfrag{KK = 4, BB = 0.025}{\tiny $k = 4, \beta = 0.025$}
	\psfrag{KK = 5, BB = 0.025}{\tiny $k = 5, \beta = 0.025$}
	\includegraphics[width=3.5in]{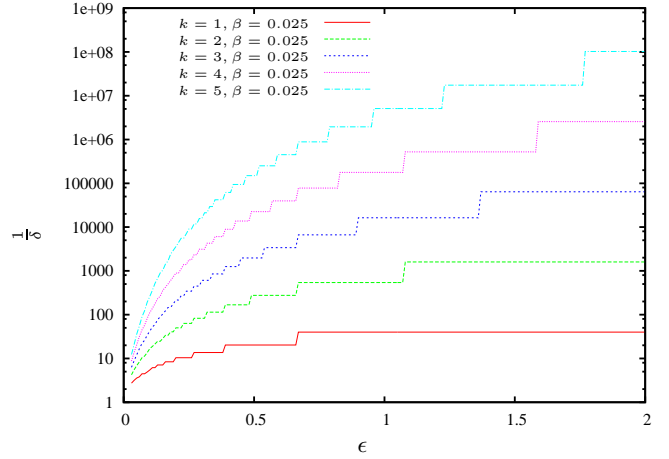}
	\caption{A graph showing the relationship between $\epsilon$ and $\frac{1}{\delta}$ with small $k$'s, varying $k$ and fixing $\beta$.}\label{fig:eps_delta4}
\end{figure}

\begin{figure}
	\psfrag{KK}{$k$}
	\psfrag{EPS}{$\epsilon$}
	\psfrag{BB = 0.05}{\tiny $\beta = 0.05$}
	\psfrag{BB = 0.1}{\tiny $\beta = 0.1$}
	\psfrag{BB = 0.2}{\tiny $\beta = 0.2$}
	\includegraphics[width=3.5in]{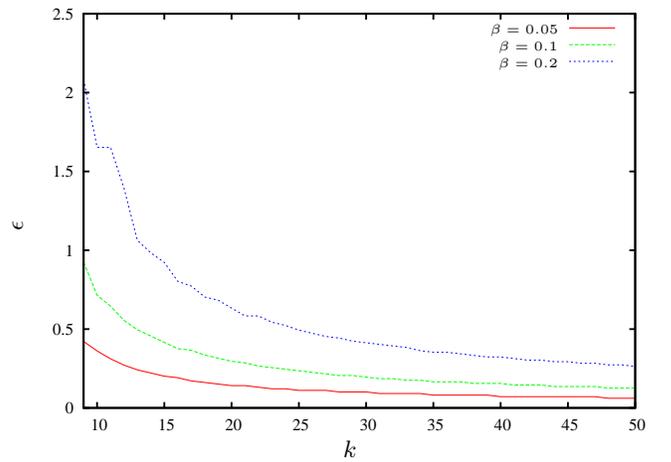}
	\caption{A graph showing the value of $\epsilon$ satisfied by a given $k$ if $\delta \leq 10^{-6}$ with varying sampling probabilities.}\label{fig:k_eps}
\end{figure}

%\subsection{Safe}

%\begin{theorem}
%Any $\epsilon$-safe anonymization algorithm satisfies \bedDPS.
%\end{theorem}

%\subsection{Privacy of $\epsilon$-Safe $k$-Anonymization}

\subsection{$\epsilon$-Safe $k$-Anonymization}

In practice, requiring a $k$-anonymization algorithm to be strongly safe is likely to result in outputs that have low utility.  We now relax this requirement to allow the generalization scheme to be chosen in a way that depends on the input dataset, but does not overly depend on any individual tuple.

\begin{definition}[$\epsilon$-Safe $k$-Anonymization]
We say that a $k$-anonymization algorithm \AA is $\epsilon$-safe if and only if the function $\AA_m$ satisfies \eDP.
\end{definition}

One possible approach to do this is to consider various possible generalization schemes, uses a quality function to assign a quality to each of them, and then uses the exponential mechanism~\cite{MT07} to select in a differentially private way a generalization scheme that gives good utility.

The following theorem shows that $\epsilon$-safe $k$-Anonymization also satisfies \bedDPS.
\begin{theorem}\label{thm:esafe}
Any $\epsilon_1$-safe $k$-anonymization algorithm satisfies \bedDPS, where $\epsilon \geq -\ln(1-\beta)+\epsilon_1$, $\delta = d(k, \beta, \epsilon-\epsilon_1)=\max\limits_{n:n\geq\lceil \frac{k}{\gamma}-1\rceil }\sum_{j>\gamma n}^{n}f(j;n,\beta)$, $\gamma = \frac{(e^{\epsilon-\epsilon_1}-1+\beta)}{e^{\epsilon-\epsilon_1}}$.
\end{theorem}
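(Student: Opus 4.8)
The plan is to analyze $\AA^\beta$ directly, following the structure of the proof of Theorem~\ref{thm:main}. Fix a pair of neighboring datasets with $D = D' \cup \{t\}$ and a set $O$ of outputs, and couple the two sampling steps so that the sample of $D'$ is a random multiset $T^*$ and the sample of $D$ equals $T^* \cup \{t\}$ with probability $\beta$ and $T^*$ with probability $1-\beta$. Since $\AA(\cdot,k) = \mathsf{Apply}(\AA_m(\cdot,k),\cdot,k)$, this decomposition gives
$$\Pr[\AA^\beta(D)\in O] = (1-\beta)\Pr[\AA^\beta(D')\in O] + \beta Q,$$
where $Q = \Pr[\mathsf{Apply}(\AA_m(T^*\cup\{t\}), T^*\cup\{t\}) \in O]$, the probability being over $T^*$ and the coins of $\AA_m$. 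The ``reverse'' requirement $\Pr[\AA^\beta(D')\in O] \le e^\epsilon\Pr[\AA^\beta(D)\in O] + \delta$ then follows at once from $e^\epsilon(1-\beta)\ge 1$, which holds because $\epsilon \ge -\ln(1-\beta)+\epsilon_1 \ge -\ln(1-\beta)$; so the real work is to establish $\beta Q \le (e^\epsilon - 1 + \beta)\Pr[\AA^\beta(D')\in O] + \delta$.

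The first step I would take is to use the $\epsilon_1$-differential privacy of $\AA_m$ to detach the recoding from the tuple $t$. Since $T^*$ and $T^*\cup\{t\}$ are neighbors, $\Pr[\AA_m(T^*\cup\{t\}) = g] \le e^{\epsilon_1}\Pr[\AA_m(T^*) = g]$ for every recoding $g$; averaging over $T^*$ yields $Q \le e^{\epsilon_1} Q'$, where $Q' = \Pr[\mathsf{Apply}(\AA_m(T^*), T^*\cup\{t\})\in O]$ --- i.e. the recoding is now drawn from data that does not contain $t$. After this substitution the only remaining way $t$ influences the output is through the multiplicity of the group $g(t)$ and whether that group survives pruning, which is precisely the dependence controlled in Theorem~\ref{thm:main}.

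The second step is to condition on the coins of $\AA_m$, so that the recoding $g$ becomes a deterministic function of the sampled set, and then to compare $\mathsf{Apply}(g, T^*)$ with $\mathsf{Apply}(g, T^*\cup\{t\})$: these agree except on the count of $g(t)$'s group, which after sampling is (up to the guaranteed copy of $g(t)$) distributed like the number of successes in a binomial experiment over those tuples of $D$ that $g$ maps to $g(t)$; write $n$ for the number of such tuples. Running the calculation from the proof of Theorem~\ref{thm:main} with the reduced budget $\epsilon - \epsilon_1$ in place of $\epsilon$ --- so that the relevant cutoff is $\gamma n$ with $\gamma = (e^{\epsilon-\epsilon_1}-1+\beta)/e^{\epsilon-\epsilon_1}$ --- shows that on the ``good'' outputs, where $g(t)$'s group has at most $\gamma n$ copies, $\beta e^{\epsilon_1} Q'$ is bounded by $(e^\epsilon - 1 + \beta)\Pr[\AA^\beta(D')\in O]$ (here $e^{\epsilon_1}\ge 1$ is what absorbs the extra factor), while the ``bad'' outputs, where the group exceeds $\gamma n$, occur with total probability at most $\sum_{j>\gamma n} f(j;n,\beta)$; maximizing over the admissible group sizes $n \ge \lceil k/\gamma - 1\rceil$ that the pruning threshold $k$ forces gives $d(k,\beta,\epsilon-\epsilon_1)$, which is absorbed into $\delta$. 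The side condition $\epsilon - \epsilon_1 \ge -\ln(1-\beta)$ is exactly what makes the internal reverse-direction estimate go through, and combines with the $\epsilon_1$ from the first step to give the hypothesis $\epsilon \ge -\ln(1-\beta)+\epsilon_1$.

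The step I expect to be the main obstacle is making the second step rigorous in the presence of the correlation between the recoding $g = \AA_m(T^*)$ and the sampled set $T^*$: one cannot simply plug the fixed-$g$ guarantee of Theorem~\ref{thm:main} into the marginal law of $g$, so the $e^{\epsilon_1}$ factor from $\AA_m$ and the per-group binomial ratio from pruning must be propagated through a single conditioning argument. The delicate point is verifying that this correlation does not inflate the additive term --- i.e. that the bad-event mass stays at $d(k,\beta,\epsilon-\epsilon_1)$ rather than $e^{\epsilon_1}\,d(k,\beta,\epsilon-\epsilon_1)$ --- which should follow because the probability of a bad output is a function of the binomially distributed sampled counts alone and is therefore bounded uniformly over whichever recoding $\AA_m$ happens to output.
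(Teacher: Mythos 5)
Your proposal is correct in its two essential ingredients, which are exactly the paper's: use the $\epsilon_1$-differential privacy of $\AA_m$ to extract a multiplicative factor $e^{\epsilon_1}$ from the choice of recoding, and then rerun the binomial analysis of Theorem~\ref{thm:main} with the residual budget $\epsilon-\epsilon_1$, which is where $\gamma=(e^{\epsilon-\epsilon_1}-1+\beta)/e^{\epsilon-\epsilon_1}$ and $\delta=d(k,\beta,\epsilon-\epsilon_1)$ come from. The route differs in two ways. First, the paper does not use your coupling decomposition $\Pr[\AA^\beta(D)\in O]=(1-\beta)\Pr[\AA^\beta(D')\in O]+\beta Q$ (that is the skeleton of the proof of Theorem~\ref{thm:amplification}); instead it writes $\Pr[\AA^\beta(D)=S]$ as $\sum_{g}\Pr[\AA_m(D)=g]\cdot\Pr[g(\Lambda_\beta(D))=S]$ and bounds this ratio termwise by $e^{\epsilon_1}r(g)$, where $r(g)$ is the per-recoding ratio already computed in the proof of Theorem~\ref{thm:main}. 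Second---and this is the substantive difference---in the paper's decomposition $\AA_m$ is evaluated on the \emph{unsampled} datasets $D$ and $D_{-t}$, so the recoding is independent of the sampling randomness and the fixed-$g$ bad-event bound $\sum_{j>\gamma n}f(j;n,\beta)$ applies verbatim for every $g$; the correlation you single out as the main obstacle simply does not arise. You instead evaluate $\AA_m$ on the sampled set $T^*$, which is the literal reading of $\AA^\beta$ under Definition~\ref{def:DPS}, and this is why you are left arguing that the conditional law of the $g(t)$-group count given $g=\AA_m(T^*)$ still obeys the binomial tail bound. That step is genuinely not immediate---conditioning on $g$ tilts the distribution of $T^*$, so the count is no longer exactly binomial---and your closing sentence asserts rather than proves the needed uniformity. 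In short, the paper's ordering buys clean independence at the price of analyzing the ``compute $g$ from $D$, then sample, then apply'' composition rather than literally ``sample, then run $\AA$''; your ordering is faithful to the definition but leaves that conditioning step to be closed, e.g.\ by paying a second factor of $e^{\epsilon_1}$ in the bad-event mass or by a more careful argument over $g$.
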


See Appendix~\ref{app:esafe} for the proof.

\subsection{Remarks of the Result}

Theorems~\ref{thm:main} and~\ref{thm:esafe} show that $k$-anonymization, when done safely, and when preceded by a random sampling step, can satisfy \edDP with reasonable parameters.  In the literature, $k$-anonymization and differential privacy have been viewed as very different privacy guarantees.  $k$-anonymization achieves weak syntactic privacy, and differential privacy provides strong semantic privacy guarantees.  Our result is, to our knowledge, the first to link $k$-anonymization with differential privacy.  This suggests that the ``hiding in a crowd of $k$'' privacy principle indeed offers some privacy guarantees when used correctly.  We note that this principle is used widely in contexts other than privacy-preserving publishing of relational data, including location privacy and publishing of social network data, network packets, and other types of data.

We also observe that another way to interpret our result is that this provides a new method of satisfying \edDP.  Existing methods for satisfying differential privacy include adding noise according to the global sensitivity~\cite{Dwo06}, adding noise according to the smooth local sensitivity~\cite{NRS07}, and the exponential mechanism~\cite{MT07} which directly assigns probabilities to each possible answer in the range.
%
%The method implied here,
%this is another
%
%existing techniques for satisfying differential privacy rely almost exclusively on output perturbation, that is, adding noises to the query outputs.
%
Our result suggests an alternative approach:  Rather than adding noises to the output, one can add a random sampling step in the beginning and prune results that are too sensitive to changes of a single individual tuple (i.e., tuples that violate $k$-anonymity).  In other words, when the dataset is resulted from random sampling, then one can answer count queries \emph{accurately} provided that the result is large enough.
%
%In Section~\ref{sec:DPS}, we show further application of this random sampling step for differential privacy.
%
An intriguing question is whether other input perturbation techniques can be used to satisfy differential privacy as well.

 \comment{
\mypara{Towards practical $k$-Anonymization.}
While Theorem~\ref{thm:main} gives a microdata anonymization method that satisfies differential privacy, one may argue that the algorithm is impractical.  Requiring the usage of a data-independent global generalization scheme may be too restrictive.  Furthermore, it is well known that $k$-anonymity suffers from the curse of dimensionality~\cite{Agg05}. We now discuss how these issues can be addressed.  The comprehensive development of these ideas is beyond the scope of this paper.

First, rather than having to use a data-independent global generalization scheme, one could compute a generalization scheme using the dataset.  One just needs to ensure that the generalization scheme does not depend too much on any individual tuple.  Here is the sketch of an algorithm that satisfies \edDP.  Given a dataset $D$, one first computes a generalization scheme in a way that satisfies \eDP.  One then samples from $D$ with probability $\beta$, apples the generalization scheme to the sampled tuples, and finally suppresses any tuple that appears less than $k$ times.  Our proof of Theorem~\ref{thm:main} can be extended to show that the overall algorithm satisfies \edDP for suitable parameters.

Second, the curse of dimensionality can be dealt with by vertical partitioning high-dimensional data~\cite{LL}. The idea is to group attributes into columns based on the correlations among the attributes. Each column contains a subset of attributes that are highly correlated.  One can anonymize and publish the different columns separately.  That is, one tuple is divided into several segments by the columns, and each segment is published in a different output table, with each segment treated as a new tuple.  This reduces the dimensionality of the data.  If the grouping of attributes into columns can be done in a way that satisfies differential privacy, each tuple affects at most a small number of columns, and each column is published in a way that satisfies differential privacy, then the overall output can satisfy differential privacy.
 }

%In Section~\ref{practical}, we will discuss how to address these challenges when developing more practical data anonymization approaches.

%\section{Smooth Error Bound} \label{sec:smooth}
%\input{smooth}

%\section{Towards Practical Data Anonymization}\label{sec:practical}
%\input{practical}

\section{Related Work} \label{sec:related}
%!TEX root = ./Privacy.tex

%\todo{fix and continue}
A lot of work on privacy-preserving data publishing considers privacy notions that are weaker than differential privacy.  These approaches typically assume an adversary that knows only some aspects of the dataset (background knowledge) and tries to prevent it from learning some other aspects.  One can always attack such a privacy notion by changing either what the adversary already knows, or changing what the adversary tries to learn.  The most prominent among these notions is $k$-anonymity~\cite{Swe02a, Swe02b}.  Some follow-up notions include $l$-diversity~\cite{MGK+06} and $t$-closeness~\cite{LLV07}.  In this paper, we analyze the weaknesses of $k$-anonymity in detail, and argue that a separation between QIDs and sensitive attributes are difficult to obtain in practice, challenging the foundation of privacy notions such as $l$-diversity, $t$-closeness, and other ones centered on attribute disclosure prevention.
%Differential privacy and similar algorithmic privacy notions offer a more sound foundation for privacy-preserving data publishing.

%In this paper we consider the validity of the notion motivating $k$-anonymity. More precisely, we examine the notion of ``hiding in a crowd of $k$''. We show that, while this notion is perfectly sound, $k$-anonymity fails of capture its essence. Furthermore, $l$-diversity and $t$-closeness are still vulnerable to the disadvantages of $k$-anonymity which we discuss in this paper. We discuss the weaknesses in this paper and present an algorithm which is not vulnerable to them.

The notion of differential privacy was developed in a series of works \cite{DN03,DN04,BDMN05,DMN+06,Dwo06}.  It represents a major breakthrough in privacy-preserving data analysis.  In an attempt to make differential privacy more amenable to more sensitive queries, several relaxations have been developed, including $(\epsilon,\delta)$-differential privacy \cite{DN03, DN04, BDMN05, DMN+06}.
 %We use $(\epsilon,\delta)$-differential privacy in our paper.
Three basic general approaches to achieve differential privacy are adding Laplace noise proportional to the query's global sensitivity~\cite{Dwo06,DMN+06}, adding noise related to the smooth bound of the query's local sensitivity~\cite{NRS07}, and the exponential mechanism to select a result among all possible results~\cite{MT07}.
A survey on these results can be found in~\cite{Dwo08}.
 %This approach lends itself rather nicely to statistical results, but its does not solve the general microdata publishing problem.
Our approach suggests an alterative by using input perturbation rather than output perturbation to add uncertainty to the adversary's knowledge of the data.

%The survey due to C. Dwork \cite{Dwork08a} showed that the ad hoc and syntactic way of defining privacy do not work. We should reconsider the privacy protection problems in a rigorous and formal way. This is the motivation of the research of differential privacy. Differential privacy has been broadly applied to many areas, such as publishing search logs \cite{GMW+09}, \cite{KKM+09}, synthetic data generation of commuting patterns \cite{MKAGV08}.

%The main idea behind this relaxation is that while differential privacy protects against the worst-case disclosure, the probability of such a disclosure is, in often case, miniscule. Hence, such relaxations aim to allow occurrences of those worst-case events which have relatively small (even polynomially negligible) probability and to preserve privacy for the remaining events.
%Moreover, we also point out its pitfalls when being applied in microdata publishing, and propose $f$-smooth $(\epsilon,\delta)$-differential privacy.
%While such a relaxation is useful, our paper challenges its main formulation by showing that an algorithm which intuitively does not provide strong privacy, can still satisfy it. We thus give a better bound on the disclosure probability, $\delta$.

Random sampling~\cite{AST05,AH05} has been studied as a method for privacy preserving data mining, where privacy notions other than differential privacy were used.
%As discussed in Section~\ref{sec:udp:amp}, the relationship between sampling and differential privacy was also studied in Chaudhuri et al.~\cite{CM06} and by Smith.
%In this work, however, sampling is used by itself, rather than as a pre-processing step to a ``safe'' $k$-anonymization algorithm or another differentially private algorithm.  Therefore, in their results the disclosure probability $\delta$ is lower-bounded by the sampling probability, $\beta$.
%
The relationship between sampling and differential privacy has been explored before.  Chauduri and Mishra~\cite{CM06} studied the privacy effect of sampling, and showed a linear relationship between the sampling probability and the error probability $\delta$.
Their result suggests an approach to perform first $k$-anonymization and then sampling as the \emph{last} step.  We instead consider the approach of perform sampling as the \emph{first} step and then $k$-anonymization.  Our result suggests that the latter approach benefits much more from the sampling.

%Our results on the relationship between sampling and differential privacy are more powerful and comprehensive.

%Differential privacy has been used for a number of real-world applications. A. Machanavajjhala et al. \cite{MKA+08} and M. G\"otz et al. \cite{GMW+09} give algorithms that can satisfy $(\epsilon, \delta)$-differential privacy which employ some attribute suppression. Their algorithms, however, require larger parameters for $\epsilon$ and $\delta$. Furthermore, the algorithms are specific to the situations they consider. Our method of satisfying $(\epsilon, \delta)$-differential privacy is backed with theoretical proofs and would potentially require smaller and more reasonable parameters.

There exists some work on publishing microdata while satisfy \edDP or its variant.
Machanavajjhala et al.~\cite{MKA+08} introduced a variant of \edDP called $(\epsilon,\delta)$-probabilistic differential privacy and showed that it is satisfied by a synthetic data generation method for the problem of releasing the commuting patterns of the population in the United States.  This notion is stronger than \edDP.  Korolova et al.~\cite{KKM+09} considered publishing search queries and clicks that achieves $(\epsilon,\delta)$-differential privacy. A similar approach for releasing query logs with differential privacy was proposed by G{\"o}tz et al.~\cite{GMW+09}.  These approaches apply the output perturbation technique in differential privacy to microdata publishing scenarios that can be reduced to histogram publishing at their core.
Blum et al.~\cite{BLR08} and Dwork et al.~\cite{DNR09} considered outputing synthetic data generation that is useful for a particular class of queries. These papers do not deal with the relationship between $k$-anonymization and differential privacy, or between sampling and $k$-anonymization.

Kifer and Lin~\cite{KL10} developed a general framework to characterize relaxation of differential privacy.  They identified two axioms for a privacy definition: Transformation Invariance and Privacy Axiom of Choice, which are satisfied by \bedDPS.  They did not consider the composability of these notions, which was our emphasis, as a clear understanding of the composability issues directs us what can and cannot be done with sampled dataset.

\section{Conclusions} \label{sec:conclusions}

%In this paper, we take a major step towards practical solutions for publishing microdata while providing rigorous privacy protection.
We have answered the two questions we set out in the beginning of the paper.  We take the approach of starting from both $k$-anonymization and differential privacy and trying to meet in the middle.  On the one hand, we identify weaknesses in the $k$-anonymity notion and existing $k$-anonymization methods and propose the notion of safe $k$-anonymization to avoid these privacy vulnerabilities.  On the other hand, we try to relax differential privacy to take advantage of the adversary's uncertainty of the data.  The key insight underlying our results is that random sampling can be used to bridge this gap between $k$-anonymization and differential privacy.

We have explored both the power and potential pitfalls to take advantage of sampling in private data analysis or publishing.  Our results show that sampling, when used correctly, is a powerful tool that can greatly benefit differential privacy, as it creates uncertainty for the adversary.  Sampling can increase the privacy budget and error toleration bound.  Sampling also enables the usage of algorithms such as safe $k$-anonymization; however, this usage requires fresh sampling that is not used to answer any other query.  
An intriguing open question is whether there exist approaches other than sampling that can create uncertainty for the adversary, that can tolerate answering \eDP queries.

\bibliographystyle{abbrv}
\bibliography{Privacy,Ninghui}
\appendix

\section{Proofs}

This appendix includes proofs not included in the main body.

\subsection{Proof of Theorem~\ref{thm:amplification}} \label{app:amplification}

Theorem~\ref{thm:amplification}. \emph{Given an algorithm \AA that satisfies $\DPS{(\beta_1,\epsilon_1,\delta_1)}$, \AA also satisfies $\DPS{(\beta_2,\epsilon_2,\delta_2)}$ for any $\beta_2 < \beta_1$, where \\ $\epsilon_2=\ln\left(1+\left(\frac{\beta_2}{\beta_1}(e^\epsilon_1-1)\right)\right)$ and $\delta_2=\frac{\beta_2}{\beta_1}\delta_1$.}

\begin{proof}
We need to show that the algorithm $\AA^{\beta_2}$ satisfies $\DP{(\epsilon_2,\delta_2)}$.
Let $\beta = \frac{\beta_2}{\beta_1}$.  The algorithm $\AA^{\beta_2}$ can be viewed as first sampling with probability $\beta$, then followed by applying the algorithm $\AA^{\beta_1}$, which satisfies $\DP{(\epsilon_1,\delta_1)}$.

We use $\Lambda_{\beta}(D)$ to denote the process of sampling from $D$ with sampling rate $\beta$. Any pair $D,D'$ can be viewed as $D$ and $D_{-t}$, where $D_{-t}$ denotes the dataset resulted from removing one copy of $t$ from $D$.  For any $O$, let
$$Z=\Pr[\AA^{\beta_2}(D)\in O], \mbox{ and } X=\Pr[\AA^{\beta_2}(D_{-t}) \in O],$$
we want to show that
$$\left(Z \leq e^{\epsilon_2} X + \delta_2\right)  \;\wedge\; \left(X \leq e^{\epsilon_2}Z + \delta_2\right).$$
%$$\Pr[\AA^{\beta_2}(D)\in O] \leq e^{\epsilon} {\Pr[\AA^{\beta_2}(D_{-t}) \in O]} + \delta_2,$$
%and
%$$\Pr[\AA^{\beta_2}(D_{-t})\in O] \leq e^{\epsilon} {\Pr[\AA^{\beta_2}(D) \in O]} + \delta_2.$$
%Let $Z=\Pr[\AA^{\beta_2}(D)\in O]$ and $X=\Pr[\AA^{\beta_2}(D_{-t}) \in O]$,
We have
$$Z=\sum_{T\subseteq D}\Pr[\Lambda_{\beta}(D)=T] \Pr[\AA^{\beta_1}(T)\in O],$$
$$X=\sum_{T\subseteq D_{-t}}\Pr[\Lambda_{\beta}(D_{-t})=T] \Pr[\AA^{\beta_1}(T)\in O].$$

To analyze $Z$, we note that all the $T$'s that resulted from sampling from $D$ with probability $\beta$ can be divided into those in which $t$ is not sampled, and those in which $t$ is sampled.  For a $T$ in the former case, we have
$$
\begin{array}{ll}
\Pr[\Lambda_{\beta}(D)=T] & = (1-\beta) \Pr[\Lambda_{\beta}(D)=T | t\mbox{ not sampled in }T]
\\
& = (1-\beta)\Pr[\Lambda_{\beta}(D_{-t})=T]
\end{array}
$$
For a $T$ in the latter case, we have
$$
\begin{array}{ll}
 \Pr[\Lambda_{\beta}(D)=T] & = \beta \Pr[\Lambda_{\beta}(D)=T | t\mbox{ sampled in }T]
 \\
 & = \beta \Pr[\Lambda_{\beta}(D_{-t})=T_{-t}].
\end{array}
$$
Hence we have
$$
{\scriptsize
\begin{array}{rl}
Z = & \sum_{T \subseteq D_{-t}}(1-\beta)  \Pr[\Lambda_{\beta}(D_{-t})=T] \Pr[\AA^{\beta_1}(T)\in O]
\\
 & + \sum_{T_{-t} \subseteq D_{-t}}\beta \Pr[\Lambda_{\beta}(D_{-t})=T_{-t}]  \Pr[\AA^{\beta_1}(T_{-t})\in O]
%\\
%= & (1-\beta) X + \beta \sum_{T'\subseteq D_{-t}}\Pr[\Lambda_{\beta}(D_{-t})=T'] \Pr[\AA^{\beta_1}(T'_{+t})\in O]
%\\
%= & (1-\beta) X + \beta Y.
\end{array}
}
$$
Let
$$Y = \sum_{T'\subseteq D_{-t}}\Pr[\Lambda_{\beta}(D_{-t})=T'] \Pr[\AA^{\beta_1}(T'_{+t})\in O],$$
$$\mbox{then we have } Z = (1-\beta)X + \beta Y.$$
%Next, we bound the following ratio:
%{\scriptsize
%$$
%\frac{Y}{X}  = \frac{\sum_{T}\Pr[\Lambda_{\beta}(D_{-t})=T] \Pr[\AA^{\beta_1}(T_{+t})=S]}
%{\sum_{T}\Pr[\Lambda_{\beta}(D_{-t})=T] \Pr[\AA^{\beta_1}(T)=S]}
%$$
%}

%Let $T^m$ be the

That $\AA$ satisfies $\DPS{(\beta_1,\epsilon_1,\delta_1)}$ means that  for each $T,O$
$$ \Pr[\AA^{\beta_1}(T_{+t})\in O] \leq e^{\epsilon_1} \Pr[\AA^{\beta_1}(T)\in O] + \delta_1$$
Hence we have
$$
{\scriptsize
\begin{array}{rl}
Y & \leq \sum_{T'\subseteq D_{-t}} \Pr[\Lambda_{\beta}(D_{-t})=T'] \left(e^{\epsilon_1} \Pr[\AA^{\beta_1}(T'\in O]+\delta_1\right),
\\
& = e^{\epsilon_1} \sum_{T'\subseteq D_{-t}} \Pr[\Lambda_{\beta}(D_{-t})=T'] \Pr[\AA^{\beta_1}(T')\in O]
\\
& \;\;\;+ \delta_1 \sum_{T'\subseteq D_{-t}} \Pr[\Lambda_{\beta}(D_{-t})=T']
\\
& = e^{\epsilon_1}  X + \delta_1.
\end{array}
}
$$
Hence we have
$$
\begin{array}{rrl}
Z &=&(1-\beta)X + \beta Y
\\
 & \leq & (1-\beta)X + \beta (e^{\epsilon_1}X +\delta_1)
 \\
 & \leq & (1-\beta + \beta e^{\epsilon_1}) X + \beta \delta_1.
 \\
 & = & e^{\epsilon_2} X + \delta_2.
\end{array}
$$

To show that $X \leq e^{\epsilon_2} Z + \delta_2$, we observe that $\AA$ satisfies $\DPS{(\beta_1,\epsilon_1,\delta_1)}$ means that
$$X \leq e^{\epsilon_1}  Y + \delta_1, \mbox{ and hence}$$
$$
Z= (1-\beta)X + \beta Y \geq (1-\beta) X + \beta e^{-\epsilon_1} (X-\delta_1),
$$
$$
\mbox{and }X \leq \frac{1}{1-\beta+\beta e^{-\epsilon_1}} Z + \frac{\beta e^{-\epsilon_1}}{1-\beta+ e \beta^{-\epsilon_1}} \delta_1
$$

We now show that
$$\frac{1}{1-\beta+\beta e^{-\epsilon_1}} \leq e^{\epsilon_2=\ln\left(1+\left(\frac{\beta_2}{\beta_1}(e^{\epsilon_1}-1)\right)\right)}=1+\beta(e^{\epsilon_1}-1) = e^{\epsilon_2}.
$$
$$
\begin{array}{ll}
 & \frac{1}{1-\beta(1-e^{-\epsilon_1})} \leq 1+\beta(e^{\epsilon_1}-1)
\\
\Leftrightarrow & 0 \leq (1+\beta(e^{\epsilon_1}-1))(1-\beta(1-e^{-\epsilon_1})) -1
\\
\Leftrightarrow & 0 \leq (e^{\epsilon_1}+e^{-\epsilon_1}-2)(\beta-\beta^2).
\end{array}
$$
Hence
$$\frac{\beta e^{-\epsilon_1}}{1-\beta+ e \beta^{-\epsilon_1}} \delta_1 \leq \beta e^{-\epsilon_1} e^{\epsilon_2} \delta_1 \leq \beta \delta_1 = \delta_2.$$
\end{proof}

\subsection{Proof of Theorem~\ref{thm:main}} \label{app:main}

Theorem~\ref{thm:main}: \emph{Any strongly-safe $k$-anonymization algorithm satisfies $\DPS{(\beta,\epsilon,\delta)}$ for any $0<\beta<1$, $\epsilon\geq - \ln(1-\beta)$, and $\delta = d(k,\beta,\epsilon)$, where the function $d$ is defined as
\begin{eqnarray*}
d(k,\beta,\epsilon) &= &
\max_{n: n \geq \left\lceil \frac{k}{\gamma} -1 \right\rceil }\sum_{j>\gamma n}^{n} f(j; n, \beta),
%\\
%&\leq & e^{-k\left(\ln\left(\frac{\gamma}{\beta}\right) - \frac{(\gamma -\beta)}{\gamma}\right)}
\end{eqnarray*}
where $\gamma = \frac{(e^\epsilon-1+\beta)}{e^\epsilon}$.
}

\begin{proof}
Let $\AA$ denote the algorithm, and $g$ be the data-independent generalization procedure in the algorithm.  For any dataset $D$, any tuple $t\in D$, and for any output $S$.  For any $\epsilon \geq - \ln(1-\beta)$, we show that the probability by which
\begin{equation}
 e^{-\epsilon} \leq \frac{\Pr[\AA(D)=S]}{\Pr[\AA(D_{-t})=S]} \leq e^{\epsilon}  \label{eqn:proof1}
\end{equation}
is violated is $\delta$.  Note that this is a stronger version of \edDP than the one in Definition~\ref{def:edDP}.  See~\cite{KS08} for relationship between the two.

Let $n$ be the number of $t'$ in $D$ such that $g(t')=g(t)$.  Let $j$ be the number of times that $g(t)$ appears in $S$.   Note that as the only difference between $D$ and $D_{-t}$ is that $D$ has one extra copy of $t$, we have.
$$
 \frac{\Pr[\AA(D)=S]}{\Pr[\AA(D_{-t}))=S]} = \frac{\Pr[\AA(D)\mbox{ has $j$ copies of g($t$)}]}{\Pr[\AA(D_{-t})\mbox{ has $j$ copies of g($t$)}]}
$$

Because any tuple that appears less than $k$ times is suppressed, either $j\geq k$, or $j=0$.
When $j=0$, we have
$$
\frac{\Pr[\AA(D)=S]}{\Pr[\AA(D_{-t})=S]} = \frac{F(k-1;n,\beta)}{F(k-1;n-1,\beta)} = \frac{\sum_{i=0}^{k-1} f(i; n, \beta)}{\sum_{i=0}^{k-1} f(i; n-1, \beta)}.
$$
Because $F(k-1;n,\beta)$ is always less than $F(k-1;n-1,\beta)$;\footnote{Let $X_i$'s be random variables that take the value $1$ with probability $\beta$, and $0$ with probability $1-\beta$.  $F(k-1;n-1,\beta)$ is the probability that the sum of $n-1$ such $X$'s $\leq k-1$, and $F(k-1;n,\beta)$ is the probability that the sum of $n$ such $X$'s is $\leq k-1$.} hence $\frac{\Pr[\AA(D)=S]}{\Pr[\AA(D_{-t})=S]}<e^{\epsilon}$.  Furthermore, we note that $\forall i \in [0..k-1]$, $\frac{f(i;n,\beta)}{f(i;n-1,\beta)}= \frac{n(1-\beta)}{n-i} \geq (1-\beta)$.  Hence $\frac{\Pr[\AA(D)=S]}{\Pr[\AA(D_{-t}))=S]} \geq (1-\beta)$.
Because $\epsilon \geq - \ln(1-\beta)$, we have $e^{-\epsilon} \leq 1-\beta $; hence under the case when $j=0$, inequality (\ref{eqn:proof1}) is satisfied.

When $j\geq k$, we have
$$
\frac{\Pr[\AA(D)=S]}{\Pr[\AA(D_{-t}\ref{thm:main}))=S]} = \frac{f(j;n,\beta)}{f(j;n-1,\beta)} =
\left\{\begin{array}{lr}
 \frac{n(1-\beta)}{n-j} & n \geq j
 \\
 1 & n<j.
 \end{array}
 \right.
$$

The choice of $n$ can be arbitrary because it is determined by the choice of $D$.  The value of $j$ is determined by the choice of $S$.  For some values of $j$, inequality~(\ref{eqn:proof1}) is violated.  We want to compute the probabilities of these bad $j$'s occurring.  From the above, we know %that $j=n$ is such a bad outcome.  W
when $j>n$, the outcome is good.  We now consider the bad outcomes when $j \leq n$.

Note that because $\epsilon\geq - \ln(1-\beta)$, we have $-\epsilon\leq \ln(1-\beta)$, and
$$\frac{n(1-\beta)}{n-j} > 1-\beta \geq e^{-\epsilon}.$$
Hence we only need to consider what $j$'s make $\frac{n(1-\beta)}{n-j} > e^\epsilon$.  This occurs when $j> \frac{(e^\epsilon-1+\beta)n}{e^\epsilon}$.  Let $\gamma = \frac{(e^\epsilon-1+\beta)}{e^\epsilon}$, then this occurs when $j > \gamma n$.

So far our analysis has shown that \ref{thm:main}a bad outcome $S$ for an input $D$ would satisfy the condition $j \geq k$ and $n \geq j > \gamma n$.  Now we need to compute the probability that $\AA(D)$ gives a bad outcome, and the probability that $\AA(D_{-t})$ gives a bad outcome.  The former is given below:
\begin{equation}
\max_n \sum_{j: \left(j\geq k \wedge j > \gamma n\right)}^n f(j; n, \beta) \label{eqn:errorprob}
\end{equation}
And the latter is
$$
\max_n \sum_{j: \left(j\geq k \wedge j > \gamma n\right)}^{n-1} f(j; n-1, \beta).
$$
As the latter is smaller than the former, we only need to bound the former.
%Hence the total probability that inequality~(\ref{eqn:proof1}) does not hold is:

Let $n_m = \left\lceil \frac{k}{\gamma} -1 \right\rceil$, we now show that when $n \leq n_m$, $\sum_{j: \left(j\geq k \wedge j > \gamma n\right)}^n f(j; n, \beta)$ increases when $n$ increases.  Note that the choice of $n_m$ satisfies the condition that $\gamma n_m < k$ and $\gamma (n_m+1) \geq k$.  Observe that when $n \leq n_m$, the condition $\left(j\geq k \wedge j > \gamma n \right)$ becomes $j\geq k$.  The function $\sum_{j: j\geq k}^n f(j; n, \beta)$ is monotonically increasing with respect to $n$.

When $n\geq n_m$, the condition $\left(j\geq k \wedge j > \gamma n\right)$ becomes $j>\gamma n$. (In fact, when $n=n_m+1$, the smallest $j$ to satisfy $j > \gamma n$ is $k+1$.)
Hence the error probability is bounded by
{\scriptsize
$$
\delta=d(k,\beta,\epsilon)=\max_{n: n \geq \left\lceil \frac{k}{\gamma} -1 \right\rceil }\sum_{j>\gamma n}^{n} f(j; n, \beta), \mbox{ where } \gamma = \frac{(e^\epsilon-1+\beta)}{e^\epsilon}.
$$
}

\comment{
We now show that $d(k,\beta,\epsilon) \leq e^{-k\left(\ln\left(\frac{\gamma}{\beta}\right) - \frac{(\gamma -\beta)}{\gamma}\right)}.$
%Note that we are overloading $\gamma$ to also be a function $\gamma(z)=\frac{(e^{\epsilon_0 z}-1+\beta)}{e^{\epsilon_0 z}}$, since we want to study the behavior of $\delta$ with different values.  Note that $\gamma(1) = \frac{(e^{\epsilon_0}-1+\beta)}{e^{\epsilon_0}}$, and $\gamma = \frac{(e^\epsilon-1+\beta)}{e^\epsilon} = \gamma\left(\frac{\epsilon}{\epsilon_0}\right).$

%In order to prove the $f$-smooth bound for a function $f$, we cannot just use numerical methods to compute the function $d$ for different values.

We use the following Chernoff bound as an upperbound of the above. Given $X$ to be the sum of $n$ independent binary random variables (i.e. for the binomial distribution), and let $\mu=\mathbb{E}[X]$, then the Chernoff bound states that we have
	$$\Pr[X\geq (1+d)\mu] \leq e^{-\mu \left((1+d) \ln(1+d) -d\right)}$$

To make it more convenient to use the Chernoff bound, we note that
$$\delta \approx
\max_{n: n \geq \frac{k}{\gamma} }\sum_{j\geq \gamma n}^{n} f(j; n, \beta).$$

We use the Chernoff bound to bound $\tau(n) = \sum_{j\geq \gamma n}^{n} f(j; n, \beta)$, in which case we have $\mu=\beta n$:
	$$
	\begin{array}{ll}
	\tau(n) & =\sum_{j\geq \gamma n}^{n} f(j; n, \beta)
	 \\
	 & \leq \Pr\left[X \geq \left(1+\left(\frac{\gamma}{\beta}-1\right)\right) \beta n \right]
	\\
	 & \leq e^{-\beta n \left(\frac{\gamma}{\beta} \ln \left(\frac{\gamma}{\beta}\right) - \left(\frac{\gamma}{\beta} -1 \right)\right) }
	\\
	 & = e^{-n \left(\gamma  \ln\left(\frac{\gamma}{\beta}\right) - (\gamma -\beta)\right)}
	\end{array}
	$$

Note that $\left(\gamma  \ln\left(\frac{\gamma}{\beta}\right) - (\gamma -\beta)\right) > 0$ for all positive $\gamma$ and $\beta$'s.  (See, for example, http://www.wolframalpha.com/.)  Hence the above function is monotonically decreasing when $n$ increases.  Hence
$$
\begin{array}{ll}
\delta & =d(k,\beta,\epsilon)
%\max_{n: n \geq \left\lceil \frac{k}{\gamma} -1 \right\rceil }\sum_{j>\gamma n}^{n} f(j; n, \beta)
\\
 & \leq e^{-\frac{k}{\gamma}\left(\gamma  \ln\left(\frac{\gamma}{\beta}\right) - (\gamma -\beta)\right)}
\\
 & = e^{-k\left(\ln\left(\frac{\gamma}{\beta}\right) - \frac{(\gamma -\beta)}{\gamma}\right)}
\end{array}
$$
}
\end{proof}

\subsection{Proof of Theorem~\ref{thm:esafe}} \label{app:esafe}

Theorem~\ref{thm:esafe}:\emph{
Any $\epsilon_1$-safe $k$-anonymization algorithm satisfies $\DPS{(\beta, \epsilon, \delta)}$, where $\epsilon \geq -\ln(1-\beta)+\epsilon_1$, $\delta = d(k, \beta, \epsilon-\epsilon_1)=\max\limits_{n:n\geq\lceil \frac{k}{\gamma}-1\rceil}\sum_{j>\gamma n}^{n}f(j;n,\beta)$, $\gamma = \frac{(e^{\epsilon-\epsilon_1}-1+\beta)}{e^{\epsilon-\epsilon_1}}$.
}

\begin{proof}
Let $\mathcal{A}$ denote the $\epsilon_1$-safe $k$-anonymization algorithm. Here, we want to show that for any $\epsilon \geq -\ln(1-\beta)+\epsilon_1$, $D$, $t\in D$ and $S$,
\begin{equation}\label{eq}
e^{-\epsilon} \leq \frac{\Pr[\mathcal{A}(D)=S]}{\Pr[\mathcal{A}(D_{-t})=S]} \leq e^{\epsilon}
\end{equation}
is valid for probability at least $1-\delta$.
Let $\Lambda_{\beta}$ denote the process of binomial sampling the dataset $D$ with probability $\beta$. And let $G$ denote the set of all the possible outputs of $\mathcal{A}$'s subroutine $\mathcal{A}_m$. By definition, its subroutine $\mathcal{A}_m$ satisfies $\epsilon_1$-differential privacy,
\begin{equation*}
e^{-\epsilon_1} \leq \frac{\Pr[\mathcal{A}_m(D)=g]}{\Pr[\mathcal{A}_m(D_{-t})=g]} \leq e^{\epsilon_1}.
\end{equation*}
And, according to the proof of Theorem~\ref{thm:esafe}, for a fixed $g\in G$, the ratio $r(g) = \frac{\Pr[g(\Lambda_{\beta}(D))=S]}{\Pr[g(\Lambda_{\beta}(D_{-t}))=S]}$ equals
\begin{displaymath}
r(g) = \left\{\begin{array}{ll}
\frac{\sum_{i=0}^{k-1}f(i; n, \beta)}{\sum_{i=0}^{k-1}f(i; n-1, \beta)} \qquad &\mbox{if } j=0;\\ \frac{n(1-\beta)}{n-j} \qquad &\mbox{if } k\leq j\leq n
\end{array}\right.
\end{displaymath}
where $j$ is the number of copies of $g(t)$ in the output dataset $S$.
So, the differential privacy ratio (\ref{eq}) can be upper bounded,
$$
\begin{array}{rl}
 & \frac{\Pr[\mathcal{A}(D) = S]}{\Pr[\mathcal{A}(D_{-t}) = S]}
 \\
=& \frac{\sum_{g\in G}\Pr[\mathcal{A}_{m}(D) = g]\cdot \Pr[g(\Lambda_{\beta}(D))=S]}
{\sum_{g\in G}\Pr[\mathcal{A}_{m}(D_{-t}) = g]\cdot \Pr[g(\Lambda_{\beta}(D_{-t}))=S]}\\
\leq & \frac{e^{\epsilon_1}\sum_{g\in G}\Pr[\mathcal{A}_{m}(D_{-t}) = g]\cdot \Pr[g(\Lambda_{\beta}(D))=S]}
{\sum_{g\in G}\Pr[\mathcal{A}_{m}(D_{-t}) = g]\cdot \Pr[g(\Lambda_{\beta}(D_{-t}))=S]}\\
\leq & \frac{e^{\epsilon_1}r(g)\sum_{g\in G}\Pr[\mathcal{A}_{m}(D_{-t}) = g]\cdot\Pr[g(\Lambda_{\beta}(D_{-t}))=S]}
{\sum_{g\in G}\Pr[\mathcal{A}_{m}(D_{-t}) = g]\cdot\Pr[g(\Lambda_{\beta}(D_{-t}))=S]}\\
=& e^{\epsilon_1}r(g).
\end{array}
$$
The lower bound can be obtained in a similar way. So,
\begin{equation*}
e^{-\epsilon_1}r(g) \leq \frac{\Pr[\mathcal{A}^{\beta}(D) = S]}{\Pr[\mathcal{A}^{\beta}(D_{-t}) = S]}
\leq e^{\epsilon_1}r(g).
\end{equation*}
By the proof of Theorem ~\ref{thm:esafe}, the ratio $r(g)$ is bounded by $e^{ -(\epsilon - \epsilon_1)}\leq r(g) \leq e^{(\epsilon - \epsilon_1)}$. The probability that it is violated is the probability that inequality \eqref{eq} is violated.
In the $j=0$ case, $e^{ -(\epsilon - \epsilon_1)} \leq \frac{\sum_{i=0}^{k-1}f(i; n, \beta)}{\sum_{i=0}^{k-1}f(i; n-1, \beta)} \leq e^{(\epsilon - \epsilon_1)}$, since $\epsilon \geq -\ln(1-\beta)+\epsilon_1$.
And for the $k\leq j\leq n$ case, $\frac{n(1-\beta)}{n-j}> (1-\beta) \geq e^{-\epsilon+\epsilon_1}$. And only when $\frac{n(1-\beta)}{n-j}>e^{\epsilon-\epsilon_1}$ \big($j> \frac{n(e^{\epsilon-\epsilon_1}-1+\beta)}{e^{\epsilon-\epsilon_1}}$\big), inequality \eqref{eq} is violated. Let $\gamma = \frac{(e^{\epsilon-\epsilon_1}-1+\beta)}{e^{\epsilon-\epsilon_1}}$. The error probability $\delta$ is
\begin{displaymath}
\delta = d(k, \beta, \epsilon-\epsilon_1)=\max_{n:n\geq\lceil \frac{k}{\gamma}-1\rceil}\sum_{j>\gamma n}^{n}f(j;n,\beta),
\end{displaymath}
where $\gamma = \frac{(e^{\epsilon-\epsilon_1}-1+\beta)}{e^{\epsilon-\epsilon_1}}$.
\end{proof}

\end{document}